\DeclareMathOperator*{\argmax}{arg\,max}
\newtheorem{theorem}{Theorem}[section]
\newtheorem{definition}{Definition}[section]
\newtheorem{example}{Algorithm}[section]
\newtheorem{lemma}[theorem]{Lemma}
\def\@xfootnote[#1]{%
  \protected@xdef\@thefnmark{#1}%
  \@footnotemark\@footnotetext}
\begin{document}

\begin{titlepage}
    \begin{center}
        \vspace*{1cm}
            
        \LARGE
        \textsc{Prophet inequalities for subadditive combinatorial auctions}
            
        \vspace{1cm}
        \large
        \text{Dwaipayan Saha}\footnote[$\dagger$]{Department of Computer Science, Princeton University, 35 Olden Street, Princeton 08540, US, email:
\texttt{dsaha@princeton.edu}} \quad \text{Ananya Parashar}\footnote[$\ddagger$]{Department of Operations Research and Financial Engineering, Princeton University, 98 Charleston Street, Princeton 08540, US, email:
\texttt{parashar@princeton.edu}}
        \vspace{0.5cm}
        \begin{abstract}
    In this paper, we survey literature on prophet inequalities for subadditive combinatorial auctions. We give an overview of the previous best $O(\log \log m)$ prophet inequality as well as the preceding $O(\log m)$ prophet inequality. Then, we provide the constructive posted price mechanisms used in order to prove the two bounds. We mainly focus on the most recent literature that resolves a central open problem in this area of discovering a constant factor prophet inequality for subadditive valuations. We detail the approach of this new paper, which is \textit{non-constructive} and therefore cannot be implemented using prices, as done in previous literature.

        \end{abstract}
            
        \vspace{1cm}

        \vspace{1cm}

    \end{center}
    
\end{titlepage}

\begin{spacing}{1.3}

\tableofcontents
\newpage
\section{Introduction}

Combinatorial auctions have been a topic of high interest for countless years now. This setting is one where items are divided amongst bidders, and the auctioneer doesn't know individual bidders' preferences for each item prior to their arrival. In such a setting, we hope to maximize welfare, or the utility gained by each bidder for the items they receive. Prophet inequalities help us do so by providing a bound on the maximum possible welfare that can be achieved in these auctions.

While not immediately obvious, modeling mechanisms for the combinatorial auction problem can be extended to a wide range of real-world settings. Wireless networks is a major real-world application as different frequencies must be allocated in order to provide different types of communication services. We may also model advertising spaces as a combinatorial auction as in the world of online advertising, different combinations of advertising slots on websites and other platforms can provide different levels of exposure to potential customers. These combinations can be modeled as auctions, and prophet inequalities can be used to design mechanisms for these auctions that will find near-optimal allocations of advertising slots. We specifically look at subadditive auctions in our paper as we see in the above examples that items may be valued differently if allocated individually than allocated together.




In this paper, we survey various literature on prophet inequalities for subadditive combinatorial auctions. We focus mainly on the recent work by José Correa and Andrés Cristi in proving the existence of a $(6 + \varepsilon)$ prophet inequality for subadditive valuations \cite{constant}, achieving a major breakthrough from the previous paper by Dütting, Lucier, and Kesselheim which found an $O(\log \log m)$ prophet inequality for subadditive valuations \cite{loglog}. We will also introduce other significant previous bounds: an $O(1)$ prophet inequality for a larger class of valuations, XOS, and an $O(\log m)$ prophet inequality for subadditive valuations by Dütting, Feldman, Lucier, and Kesselheim \cite{log}.

It is important to note that the work by Correa and Cristi provides a \textit{nonconstructive} existence proof of the constant factor prophet inequality, rather than algorithms which can be implemented like the preceding $O(\log \log m)$ and $O(\log m)$ approaches discussed above. Both of these previous works use posted price mechanisms which we discuss later. While this work cannot be implemented yet, the unique tools and ideas presented in this paper in conjunction with other pricing mechanisms can help develop a constructive constant-time mechanism.

\section{Preliminaries}

\subsection{Model}

We first describe this problem informally. Consider the setting of an auction with $m$ items and $n$ bidders. Each bidder assigns a value to every subset of $m$ items with valuation function $v_i$ for $i \in [n]$. This is the simple auction that most are familiar with, but now rather than all people bidding at the same time, we consider the case where the $n$ bidders arrive sequentially. While this order may be arbitrary, assume without loss of generality that they arrive in the order $1, 2, \dots , n$ \footnote[1]{This can be formalized by considering a $\sigma: [n] \rightarrow N$, where $\sigma$ maps the ordering to the set $N$ of people and the analysis still holds.}. The auctioneer has knowledge of the distribution of the value functions, but not the individual value functions themselves prior to auction. As each person $i \in [n]$ arrives, their valuation function $v_i$ is revealed to the auctioneer upon which a subset of available items is awarded to person $i$ by our algorithm. Doing so for every person determines the allocation realized by our algorithm with respect to the valuations revealed throughout the process.

Formally, we have a set $N$ of $n$ people and a set $M$ of $m$ items. We notate $X_i$ as the outcome space, or set of allocations to each buyer $i \in N$. Moreover, $\varnothing \in X_i$ for all $i \in N$, as we know each buyer can simply receive nothing. Now the joint outcome space, or the space of all possible allocations, can be denoted by $X = X_1 \times \dots \times X_n$. We define $\vb{x}_S \in X$ as an allocation only to those buyers in the subset $S \subseteq N$:

$$
\vb{x}_S  = \begin{cases}
  x_i  & i \in S \\
  \varnothing & i \notin S
\end{cases}
$$

Moreover, $\vb{x}_{[i-1]}$ is defined similarly where $S = \{1, 2, \dots, i-1\}$. That is, we only allocate items to buyers $1$ through $i-1$, and buyers $i$ through $n$ receive $\varnothing$. While these are all indeed allocations, we denote $\mathcal{F} \subseteq X$ as only those that are feasible. We require that if $\vb{x}$ is feasible then $\vb{x}_S$ is feasible (or, more generally, $\mathcal{F}$ is downward closed).

Each buyer $i$ has a valuation function $v_i: 2^M \rightarrow \mathbb{R}_{\geq 0}$, or similarly $v_i: X_i \rightarrow \mathbb{R}_{\geq 0}$ since either $2^M$ or $X_i$ notate the set of possible allocations to buyer $i$. We assume these valuations are both normalized and monotone, which means $v_i(\varnothing) = 0$ and for $S \subseteq T \subseteq M$, we have $v_i(S) \leq v_i(T)$. Lastly, we assume that the values are finite, i.e. for all $i \in N$ and $A \subseteq M$, we have $\mathbb{E}[v_i(A)] < \infty$.

Futhermore, every $v_i$ is drawn independently from some distribution $\mathcal{D}_i$. We notate $\mathcal{D} = \times_i \mathcal{D}_i$ as the product or joint distribution. Here, $\mathcal{D}_i$ is a distribution over the set of valuations $V_i$ for person $i \in N$. Again, assume that the auctioneer has knowledge of every distribution but not the values themselves prior to auction. 


    \begin{definition}
        The classes of valuation functions in increasing order of generality are as follows:
        \begin{itemize}
            \item \textbf{Submodular:} A valuation $v$ is submodular if for any $S, T \subseteq M$ such that $S \subseteq T$ we have that $v(S \cup \{j\}) - v(S) \geq v(T \cup \{j\}) - v(T)$. Such functions can be interpreted as representations of diminishing marginal utility.
            \item \textbf{XOS:} A valuation $v$ is XOS if there are a collection of additive supports $a_1, \dots, a_k$ such that for any $S \subseteq M$ we have $v(S) = \underset{1 \leq i \leq k}{\max}a_i(S)$.

            An equivalent definition says that a valuation $v$ is XOS if for any $S \subseteq M$, there exists a fractional covering $\{\lambda_i, T_i\}_{i = 1}^k$ of $S$ such that $v(S) \leq \sum_{i = 1}^k \lambda_i v(T_i)$. A fractional covering of $S$ means for $\lambda_i > 0, T_i \subseteq M$ and for all $j \in S$ we have $\sum_{i:j\in T_i}\lambda_i \geq 1$\footnote[2]{Showing the two definitions to be equivalent requires use of duality and is left as exercise for the reader.}.

            \item \textbf{Subadditive:} A valuation $v$ is subadditive if for any $S, T \subseteq M$ we have that $v(S \cup T) \leq v(S) + v(T)$. This is the idea of items not necessarily being more valuable together than apart.
        \end{itemize}
    \end{definition}
    
We notate $\vb{v} = (v_1, \dots, v_n)$ as the vector of specific valuations, or a valuation profile. Utility can be defined as $u_i = v_i(x_i) -\pi_i$ for buyer $i$. This is the price $\pi_i$ buyer $i$ pays for the items subtracted from the value for items received $v_i(x_i)$. The welfare is defined to be $\vb{v}(\vb{x}) = \sum_{i}v_i(x_i)$ for some allocation $\vb{x}$.

Notate $\textsf{OPT}(\vb{v}) = (U_1, \dots, U_n)$ as the optimal allocation, where the optimal allocation has the maximum welfare over all feasible allocations $\textsf{OPT}(\vb{v}, \mathcal{F}) = \argmax_{\vb{x} \in \mathcal{F}} \vb{v}(\vb{x})$. $\vb{v}(\textsf{OPT}(\vb{v}))$ is the total value the allocation achieves. Similarly, we write $\textsf{ALG}(\vb{v}) = (x_1, \dots, x_n)$ as the algorithm's allocation and $\vb{v}(\textsf{ALG}(\vb{v}))$ as the value it achieves. $\textsf{ALG}_i(\vb{v})$ is the algorithm's allocation for buyer $i$ for the valuation profile $\vb{v}$.

\subsection{Prophet inequalities}

Prophet inequalities allow us to compare the worst case competitive stochastic ratio of our algorithm or mechanism to the optimal allocation in hindsight:
$$\tau = \sup_{\mathcal{D}} \frac{\mathbb{E}_{\vb{v} \sim \mathcal{D}}[\vb{v}(\textsf{OPT}(\vb{v}))]}{\mathbb{E}_{\vb{v} \sim \mathcal{D}}[\vb{v}(\textsf{ALG}(\vb{v}))]}$$
It is evident that an equivalent form is:
$$\tau \cdot \mathbb{E}_{\vb{v} \sim \mathcal{D}}[\vb{v}(\textsf{ALG}(\vb{v}))] \geq \mathbb{E}_{\vb{v} \sim \mathcal{D}}[\vb{v}(\textsf{OPT}(\vb{v}))]$$
\noindent which would provide a $\tau$ approximation or a $\tau$ prophet inequality. In this case, $\tau$ could also be referred to as the competitive ratio of our algorithm. In general, these are tools that allow one to conveniently lower bound the relative performance of an algorithm's performance to the optimum.

                


\subsection{Posted Price Mechanisms}

Posted price mechanisms define pricing rules $p_i: 2^M  \rightarrow \mathbb{R}_{\geq 0}$, which assign a non-negative price to each subset of items $S \subseteq M$. Now we define three types of prices:

\begin{itemize}
    \item \textit{Static prices} are prices independent of the items that have already been sold. That is, for any $S \subseteq M$ we have $p_i(S) = p_i(S|\vb{x}_{[i - 1]})$. 

\item \textit{Anonymous Prices} for an item are simply those that do not depend on the person purchasing it. In other words, for any $i, j \in N$ and $S \subseteq M$ we have $p_i(S) = p_j(S)$.

\item \textit{Item prices} mean that the price of a set of items is simply the sum of the prices of the items within it, or $p_i(S) = \sum_{j \in S} p_j$.
\end{itemize}
Note that there are various types of posted price mechanisms. In this paper, we begin by exploring balanced prices from \cite{log} and then prices as defined in \cite{loglog}. Furthermore, we only consider \textit{static, anonymous, and item} prices. It has been shown that balanced prices generalize to dynamic pricing rules as well in \cite{log}.

Such prices can be used to get good approximations of the optimal welfare by considering welfare to be comprised of utility and revenue. We bound each in order to evaluate the final revenue. This technique will be illustrated throughout later sections.

\section{Related Work}

\subsection{$O(\log m)$ Balanced Price Mechanism \cite{log}}

\subsubsection{Bayesian Setting with $1$ item}

To provide intuition for the power of this mechanism, we consider the single item case, where each person has a value in $\mathbb{R}_{\geq 0}$ for the item. Furthermore, for any bidder $i \in N$, their value $v_i$ is drawn from $\mathcal{D}_i$ independently. The joint distribution is $\mathcal{D} = \times_i\mathcal{D}_i$ and is known by the auctioneer ahead of time. Once again, people arrive sequentially.

Now, we post a balanced price of $p = \frac{1}{2}\mathbb{E}_{\vb{v} \sim \mathcal{D}}[\max_i v_i]$ and notate $q$ as the probability of it being sold. In order to approximate the welfare, we use a combination of a revenue and utility bound. First, note that the expected revenue is simply $\frac{1}{2}\mathbb{E}_{\vb{v} \sim \mathcal{D}}[\max_i v_i] \cdot q$. This is because, if the item sold with probability $q$, we receive $\frac{1}{2}\mathbb{E}_{\vb{v} \sim \mathcal{D}}[\max_i v_i]$ and with probability $1-q$, our revenue is $0$. Notate the indicator random variable $X_i := \mathds{1}(\text{item was still available when person $i$ arrived})$. Notate the event that the item was never sold to be $\{X_{n+1} = 1\}$. Then consider the utility bound:
\begin{align*}
    \mathbb{E}[u(\vb{v})] &= \sum_{i\in N} \mathbb{E}[(v_i - \frac{1}{2}\mathbb{E}_{\vb{v} \sim \mathcal{D}}[\max_i v_i])^+ \cdot X_i]\\
    &= \sum_{i\in N} \mathbb{E}[(v_i - \frac{1}{2}\mathbb{E}_{\vb{v} \sim \mathcal{D}}[\max_i v_i])^+] \cdot \mathbb{E}[X_i] \quad \text{by independence}\\
    &= \sum_{i\in N} \mathbb{E}[(v_i - \frac{1}{2}\mathbb{E}_{\vb{v} \sim \mathcal{D}}[\max_i v_i])^+] \cdot \Pr[X_i = 1] \geq \sum_{i \in N} \mathbb{E}[(v_i - \frac{1}{2}\mathbb{E}_{\vb{v} \sim \mathcal{D}}[\max_i v_i])^+] \cdot (1-q)
\end{align*}

\noindent The last inequality holds since $1-q = \Pr[X_{n+1} = 1] \leq \Pr[X_i = 1]$ for all $i \in [n]$. Now combining our bounds:
\begin{align*}
    \mathbb{E}[\textsf{ALG}(\vb{v})] &\geq \mathbb{E}[r(\vb{v})] + \mathbb{E}[u(\vb{v})]\\
    &\geq \frac{1}{2}\mathbb{E}_{\vb{v} \sim \mathcal{D}}[\max_i v_i] \cdot q + \sum_{i\in N} \mathbb{E}_{\vb{v} \sim \mathcal{D}}[(v_i - \frac{1}{2}\mathbb{E}_{\vb{v} \sim \mathcal{D}}[\max_i v_i])^+] \cdot (1-q)\\
    &\geq \frac{1}{2}\mathbb{E}_{\vb{v} \sim \mathcal{D}}[\max_i v_i] \cdot q + \mathbb{E}_{\vb{v} \sim \mathcal{D}}[\max_i v_i  - \frac{1}{2}\mathbb{E}_{\vb{v} \sim \mathcal{D}}[\max_i v_i]] \cdot (1-q) = \frac{1}{2}\mathbb{E}_{\vb{v} \sim \mathcal{D}}[\max_i v_i]
\end{align*}
We use that the maximum is lower than the sum, and furthermore the expression is clearly non-negative allowing us to drop $(\cdot)^+$. This gives a $2$-approximation for the expected welfare with this technique of balancing the price. If the item was bought early, the algorithm does well by generating revenue, and generating utility otherwise.

\subsubsection{Extension to Multiple Items}

Reverting to the regular model with multiple items, we first provide intuition of good balanced prices. The general idea of balanced prices is that they are sufficiently low. That is, if the optimal set $U_i$ for person $i$ is available, then they are willing to purchase it: $\sum_{j \in U_i} p_j \leq v_i(U_i)$. Moreover, they must be sufficiently high. If some subset $T \subseteq U_i$ of person $i$'s optimal subset gets taken earlier by some person $j$ then the loss in utility for person $i$ is at most the revenue generated by person $j$. This can be written as: $\sum_{j \in T} p_j \geq [v_i(U_i) - v_i(U_i \setminus T)]$ for any set $T \subseteq M$. This ``balance'' provides intuition as to why setting up such a posted price mechanism would lead to good approximations. Below, we give the general definition of balanced prices and a theorem that proves the guarantee they provide in \cite{log}.


\subsubsection{Formal Global Guarantee}
Before we state the following, we define the term exchange compatible as the set of outcome profiles $\mathcal{H} \subseteq X$ where for a feasible allocation $\vb{x} \in \mathcal{F}$ and for all allocations $\vb{y} \in \mathcal{H}$, we may replace the allocation for any person $i \in N$, $x_i$, with $y_i$ such that the resulting allocation is still feasible. Mathematically, for all $\vb{y} \in \mathcal{H}$ and people $i \in N$ we have that $(y_i, \vb{x}_{-i}) \in \mathcal{F}$. We may extend this to a family of sets $(\mathcal{F}_{\vb{x}})_{\vb{x} \in X}$ which are exchange compatible if set $\mathcal{F}_{\vb{x}}$ is exchange compatible with allocation $\vb{x}$ for all $\vb{x} \in X$. Now, we use the definition and accompanying theorem from \cite{log}, which state the following:

\begin{definition}[Key Balanced Prices Definition]
\label{5}
Let $\alpha> 0, \beta \geq 0$. Given a set of feasible allocations $\mathcal{F}$ and a valuation profile $\vb{v}$, a pricing rule $\vb{p}$ is $(\alpha, \beta)$ balanced with respect to the allocation rule $\textsf{OPT}$ \footnote[3]{In the paper, this definition considers pricing rules to be balanced with respect to an allocation rule \textsf{ALG} since it makes computation of prices easier. Here, we take the specific allocation rule to be the optimal allocation to showcase its relevance to prophet inequalities.}, an exchange-compatible family of sets $(\mathcal{F}_{\vb{x}})_{\vb{x} \in X}$, and indexing of the players $i \in [n]$ if for all $\vb{x} \in \mathcal{F}$:

\begin{enumerate}
    \item $\sum_{i \in N} p_i(x_i) \geq \frac{1}{\alpha} \left(\vb{v}(\textsf{OPT}(\vb{v})) - \vb{v}(\textsf{OPT}(\vb{v}, \mathcal{F}_{\vb{x}}))\right)$
    \item For all $\vb{x}' \in \mathcal{F}_{\vb{x}}, \sum_i p_i(x'_i) \leq \beta (\textsf{OPT}(\vb{v}, \mathcal{F}_{\vb{x}}))$
\end{enumerate}
\end{definition}


\begin{theorem}[Global Guarantee]
\label{6}
Suppose that the collection of pricing rules $(\vb{p^v})_{\vb{v} \in V}$ for feasible outcomes $\mathcal{F}$ and valuation profile $\vb{v} \in V$ is $(\alpha, \beta)$-balanced with respect to the allocation rule $\textsf{OPT}$ and indexing of the players $i \in [n]$. Then for $\delta = \frac{\alpha}{1 + \alpha \beta}$, the posted-price mechanism with pricing rule $\delta \vb{p}$, where $p_i(x_i) = \mathbb{E}_{\widetilde{\vb{v}}}[p_i^{\widetilde{\vb{v}}}(x_i)]$, generates welfare at least $ \frac{1}{1 + \alpha \beta} \cdot \mathbb{E}_{\vb{v}}[\vb{v}(\textsf{OPT}(\vb{v}))]$ when approaching players in the order they are indexed.
\end{theorem}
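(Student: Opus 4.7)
The plan is to decompose the algorithm's welfare on each realized profile $\vb{v}$ into revenue and buyer surplus, and to bound each piece using one of the two clauses of Definition~\ref{5}. Writing $\vb{x} = \textsf{ALG}(\vb{v})$ for the allocation produced by the mechanism with prices $\delta\vb{p}$, I split
\begin{equation*}
\vb{v}(\vb{x}) \;=\; \sum_i \bigl(v_i(x_i) - \delta p_i(x_i)\bigr) \;+\; \delta \sum_i p_i(x_i),
\end{equation*}
so that the welfare decomposes into total utility plus total revenue, and each summand can be bounded before taking expectation over $\vb{v}$.

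For the revenue, I apply clause~(1) of Definition~\ref{5} pointwise with the balanced prices $\vb{p}^{\widetilde{\vb{v}}}$ indexed by an independent copy $\widetilde{\vb{v}}$ of the valuation profile, so for each fixed realization
\begin{equation*}
\sum_i p_i^{\widetilde{\vb{v}}}(x_i) \;\geq\; \tfrac{1}{\alpha}\bigl(\widetilde{\vb{v}}(\textsf{OPT}(\widetilde{\vb{v}})) - \widetilde{\vb{v}}(\textsf{OPT}(\widetilde{\vb{v}}, \mathcal{F}_{\vb{x}}))\bigr).
\end{equation*}
Integrating over $\widetilde{\vb{v}}$ (which is independent of $\vb{v}$, hence of $\vb{x}$) converts $p_i^{\widetilde{\vb{v}}}(x_i)$ into the posted price $p_i(x_i)=\mathbb{E}_{\widetilde{\vb{v}}}[p_i^{\widetilde{\vb{v}}}(x_i)]$, and since $\widetilde{\vb{v}}$ and $\vb{v}$ are identically distributed, the first summand on the right integrates to $\mathbb{E}[\vb{v}(\textsf{OPT}(\vb{v}))]$.

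For the utility, I use that buyer $i$ is a myopic quasi-linear maximizer: on arrival $v_i(x_i) - \delta p_i(x_i) \geq v_i(y_i) - \delta p_i(y_i)$ for any $y_i$ whose concatenation onto $\vb{x}_{[i-1]}$ is still feasible. Choosing $\vb{y} = \textsf{OPT}(\vb{v}, \mathcal{F}_{\vb{x}})$ gives such a coordinate-wise feasible alternative by exchange compatibility together with downward-closure of $\mathcal{F}$; summing the $n$ pointwise inequalities produces $\sum_i v_i(y_i) = \vb{v}(\textsf{OPT}(\vb{v}, \mathcal{F}_{\vb{x}}))$, and clause~(2), applied with $\vb{p}^{\widetilde{\vb{v}}}$ to $\vb{y}\in\mathcal{F}_{\vb{x}}$ and then averaged over $\widetilde{\vb{v}}$, gives $\sum_i p_i(y_i) \leq \beta\,\mathbb{E}_{\widetilde{\vb{v}}}[\widetilde{\vb{v}}(\textsf{OPT}(\widetilde{\vb{v}}, \mathcal{F}_{\vb{x}}))]$.

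Adding the two bounds in expectation and grouping like terms leaves
\begin{equation*}
\mathbb{E}[\vb{v}(\vb{x})] \;\geq\; \tfrac{\delta}{\alpha}\,\mathbb{E}[\vb{v}(\textsf{OPT}(\vb{v}))] + \mathbb{E}\bigl[\vb{v}(\textsf{OPT}(\vb{v}, \mathcal{F}_{\vb{x}}))\bigr] - \bigl(\tfrac{\delta}{\alpha}+\delta\beta\bigr)\mathbb{E}\bigl[\widetilde{\vb{v}}(\textsf{OPT}(\widetilde{\vb{v}}, \mathcal{F}_{\vb{x}}))\bigr],
\end{equation*}
and the choice $\delta = \alpha/(1+\alpha\beta)$ makes $\tfrac{\delta}{\alpha}+\delta\beta = 1$ and $\tfrac{\delta}{\alpha} = \tfrac{1}{1+\alpha\beta}$. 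The step I expect to be the main obstacle is reconciling the two residual restricted-$\textsf{OPT}$ terms: the utility side delivers $\mathbb{E}[\vb{v}(\textsf{OPT}(\vb{v},\mathcal{F}_{\vb{x}}))]$ evaluated under the \emph{same} valuation profile that generated the restriction, while the revenue side delivers $\mathbb{E}[\widetilde{\vb{v}}(\textsf{OPT}(\widetilde{\vb{v}},\mathcal{F}_{\vb{x}}))]$ under an independent copy. Arguing that the former dominates the latter, so the non-negative difference can be dropped to recover exactly the $\tfrac{1}{1+\alpha\beta}$ ratio, is the crucial step and typically requires a symmetrization argument that exploits the product structure of $\mathcal{D}$ together with the fact that $\vb{v}(\textsf{OPT}(\vb{v},\mathcal{G})) = \max_{\vb{y}\in\mathcal{G}}\vb{v}(\vb{y})$ dominates any fixed alternative in $\mathcal{G}$.
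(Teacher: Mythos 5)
Your decomposition into revenue and utility, the application of clause (1) to the final allocation and clause (2) to a deviation $\vb{y}\in\mathcal{F}_{\vb{x}}$, and the calculation showing $\delta=\alpha/(1+\alpha\beta)$ forces $\delta/\alpha+\delta\beta=1$ and $\delta/\alpha=1/(1+\alpha\beta)$ are all in line with the argument in \cite{log}. The problem is the final step, and you correctly sense it is the crux, but the fix you propose is wrong: the inequality $\mathbb{E}_{\vb{v}}[\vb{v}(\textsf{OPT}(\vb{v},\mathcal{F}_{\vb{x}}))] \geq \mathbb{E}_{\vb{v},\widetilde{\vb{v}}}[\widetilde{\vb{v}}(\textsf{OPT}(\widetilde{\vb{v}},\mathcal{F}_{\vb{x}}))]$ is false in general, and typically goes the wrong way. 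The restricted feasibility set $\mathcal{F}_{\vb{x}}$ is \emph{negatively} associated with the realized $\vb{v}$: when values are high the mechanism sells more, so $\mathcal{F}_{\vb{x}}$ shrinks precisely when $\vb{v}$ is large. Concretely, take a single item and two buyers with $v_2\equiv 0$, $v_1$ random, and a fixed price $p$. Then $\mathcal{F}_{\vb{x}}$ retains the item exactly when $v_1<p$, so $\mathbb{E}[\vb{v}(\textsf{OPT}(\vb{v},\mathcal{F}_{\vb{x}}))]=\Pr[v_1<p]\cdot\mathbb{E}[v_1\mid v_1<p]$, while $\mathbb{E}[\widetilde{\vb{v}}(\textsf{OPT}(\widetilde{\vb{v}},\mathcal{F}_{\vb{x}}))]=\Pr[v_1<p]\cdot\mathbb{E}[\widetilde v_1]$, and the latter is strictly larger. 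So you cannot ``drop a non-negative difference''.

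The reason the theorem nevertheless holds with the clean constant is that in \cite{log} the deviation bundle for buyer $i$ is \emph{not} chosen from $\textsf{OPT}(\vb{v},\mathcal{F}_{\vb{x}})$ but from the restricted optimum computed on the \emph{independent} copy, i.e.\ from $\textsf{OPT}(\widetilde{\vb{v}},\mathcal{F}_{\vb{x}})$ (more precisely, on the hybrid profile that plugs $v_i$ in for $\widetilde v_i$ so that the $i$-th buyer's own value appears in the value term). The utility-maximization inequality for buyer $i$ holds against \emph{any} feasible alternative, including one built from fresh random $\widetilde{\vb{v}}$, and with this choice both residual restricted-$\textsf{OPT}$ terms come out as $\mathbb{E}_{\vb{v},\widetilde{\vb{v}}}[\widetilde{\vb{v}}(\textsf{OPT}(\widetilde{\vb{v}},\mathcal{F}_{\vb{x}}))]$ and cancel exactly once $\delta/\alpha+\delta\beta=1$; there is no leftover inequality to argue. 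That coupling is why the independent copy $\widetilde{\vb{v}}$ appears in the definition of the posted prices in the first place, and why exchange-compatibility is phrased so that $(y_i,\vb{x}_{-i})\in\mathcal{F}$ for \emph{every} $\vb{y}\in\mathcal{F}_{\vb{x}}$: it guarantees the deviation drawn from the independent copy is still a legal option for buyer $i$ given only the prefix $\vb{x}_{[i-1]}$. So your skeleton is right and the obstacle you flagged is the real one, but the resolution is not to compare the two residuals and hope for dominance; it is to choose the deviation so the two residuals are literally the same quantity.
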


It was shown that XOS functions admit $(1,1)$-balanced prices and therefore provide a $2$-approximation, which is the tightest current bound for XOS valuations \cite{log}. In the same paper, they also provide an explicit pricing rule which is $(1,1)$-balanced. From this, we obtain an $O(\log m)$ prophet inequality for general subadditive valuations since subadditive valuations can be approximated via XOS valuations with a logarithmic penalty. A formal proof of the fact uses a lemma from \cite{bawal} and can be found at \cite{dwip}.

Three years later, the same authors presented a new posted price technique that bridged the gap from $O(\log m)$ to $O(\log \log m)$, discussed in the next section. However, the balanced price approach is easier to extend than this new approach, because it derives a global guarantee from a pointwise conclusion, whereas the new mechanism does not provide such a translation.

\subsection{$O(\log \log m)$ Posted Price Mechanism \cite{loglog}}

The intuition for these prices stem from exactly what balanced prices fail to capture regarding subadditive valuations. Balanced prices have the unique property that the sum of the prices ``nearly'' approximate the optimal welfare, which leads to most items either being sold or affordable in that prices are not too high.

An example of a subadditive function is $f(x) = \sqrt{x}$, since $f(x+y) \leq f(x) + f(y)$ for all $x, y \geq 0$. The intuition from this example is that a large increase in the input does not correspond to a significant change in the output, due to the function flattening quickly. This explains why, with subadditive valuations, there are instances where it would be beneficial to only consider a smaller subset of the items available for purchase that capture a majority of the optimal welfare. Dütting, Lucier, Kesselheim achieve this by pricing items higher and targeting a specific fraction of items. In fact, they proved that the specific fraction they target always exists.

Below, we provide the key lemma proved and the corresponding theorem it implies. We give the lemma presented in the complete information case and its accompanying theorem.\footnote[4]{The proof of both for the Bayesian Setting is quite similar, but requires further care in lower bounding the value of the game and is provided in Appendix A and B of \cite{loglog} respectively. In the paper, they define new prices to provide a global guarantee instead of a pointwise guarantee, highlighting the deficiency stated at end of Section $3.1$}

\begin{lemma}
\label{subgood}
    For every $i \in N$, subadditive valuation $v \in V_i$, and set $U \subseteq M$ there exists prices $p_j$ for $j \in U$ and a probability distribution $\delta$ over $S \subseteq U$ such that for all $T \subseteq U$:
    $$\sum_{j \in T} p_j + \sum_{S \subseteq U}\delta_S\left(v(S\setminus T) - \sum_{j \in S}p_j \right)\geq \frac{v(U)}{\alpha}$$
    where $\alpha \in O(\log \log m)$.
\end{lemma}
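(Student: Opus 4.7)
My plan has two main phases: first, reformulate the existence claim as a minimax/saddle-point problem; second, construct $(p, \delta)$ using a multi-scale decomposition of $U$ that leverages subadditivity.

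For the first phase, I view the inequality as the value of a zero-sum game between a ``mechanism'' that picks $(p, \delta) \in \mathbb{R}_{\geq 0}^U \times \Delta(2^U)$ and an ``adversary'' that picks $T \subseteq U$ (or a mixed strategy $\mu$ over $T$). Without loss of generality one can bound $p_j \leq v(U)$, since any larger price only hurts the mechanism when the adversary chooses $T = \emptyset$. Although the payoff $\sum_{j \in T} p_j + \sum_S \delta_S(v(S \setminus T) - \sum_{j \in S} p_j)$ is bilinear in $(p, \delta)$, it is linear in each of $p$ and $\delta$ separately and linear in the adversary's mixed strategy; so a semi-infinite LP duality argument (exchanging ``for all $T$'' with ``for all distributions over $T$'') reduces the claim to: for every distribution $\mu$ over $T \subseteq U$ with marginals $q_j = \Pr_{T \sim \mu}[j \in T]$, there exist $(p, \delta)$ with
\[
\mathbb{E}_{S \sim \delta,\, T \sim \mu}\bigl[v(S \setminus T)\bigr] + \sum_{j \in U} p_j \bigl(q_j - \delta(j)\bigr) \geq \frac{v(U)}{\alpha},
\]
where $\delta(j) = \Pr_{S \sim \delta}[j \in S]$. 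The first term rewards placing $\delta$ on subsets that likely avoid $T$; the second rewards pricing items with large $q_j$, subject to a matching floor on $\delta(j)$.

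For the second phase, I would decompose $U$ dyadically by the magnitude of the marginals: for $\ell = 0, 1, \ldots, L = \lceil \log_2 m \rceil$, let $B_\ell = \{j \in U : q_j \in (2^{-\ell-1}, 2^{-\ell}]\}$. Subadditivity of $v$ gives $v(U) = v(\cup_\ell B_\ell) \leq \sum_\ell v(B_\ell)$, so some bucket $B_{\ell^\ast}$ carries $v(B_{\ell^\ast}) \geq v(U)/L$. A single-scale construction—$\delta$ the point mass on $B_{\ell^\ast}$ and uniform item prices of order $v(B_{\ell^\ast})/|B_{\ell^\ast}|$ on $B_{\ell^\ast}$ (and $0$ elsewhere)—already yields $\alpha = O(\log m)$, of the same order as the earlier balanced-prices guarantee. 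The improvement to $O(\log \log m)$ requires a ``ramp'' construction in which $\delta$ is a mixture across many scales, with mixing weights and companion item-prices decaying geometrically in $\ell$, so that contributions from many buckets aggregate constructively rather than being wasted on a single guess.

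The main obstacle will be controlling the loss incurred when the adversary's $T$ intersects the support of $\delta$. This rests on a subadditive concentration-type lemma asserting that if the overlap $|S \cap T|$ is probabilistically small under the joint law $\delta \otimes \mu$, then $\mathbb{E}[v(S \setminus T)]$ remains within a $1 - o(1)$ factor of $v(S)$. Since subadditivity provides only the union bound $v(A \cup B) \leq v(A) + v(B)$—and, unlike XOS, no fractional-covering gadget—such concentration is not automatic; following \cite{loglog}, one extracts it by an iterated-halving argument incurring a constant factor per stage, and only $O(\log \log m)$ stages are needed because the per-stage losses can be amortized across scales using subadditivity. This concentration-plus-pooling step is the technical crux of the proof; once it is established, reassembling the revenue and utility terms across scales to obtain the claimed $v(U)/\alpha$ lower bound is a routine bookkeeping calculation.
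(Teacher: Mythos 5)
Your first phase --- recasting the existence of $(p,\delta)$ as the value of a zero-sum game and passing to the dual side via an LP/minimax argument --- matches the paper's own high-level description of the proof (which the paper itself defers to Dütting, Lucier, Kesselheim, summarizing it only as ``LP formulation, strong duality, min-max game, lower-bound the value of the game''). So up to the reduction ``for every adversary distribution $\mu$ over $T$, exhibit a good $(p,\delta)$,'' you are on the paper's track.

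The gap is in your second phase, which is exactly the step the paper calls ``lower bounding the value of the game,'' i.e.\ the technical crux. Two issues stand out. First, the concentration claim you lean on --- that if the overlap $|S\cap T|$ is probabilistically small then $\mathbb{E}[v(S\setminus T)]$ is within a $1-o(1)$ factor of $v(S)$ --- is false for general subadditive $v$. Cardinality control gives nothing here: a single item $j^\ast\in S$ can satisfy $v(\{j^\ast\}) = v(S)$ while every other item contributes essentially zero, so even $\mathbb{E}[|S\cap T|]\ll 1$ does not prevent $\mathbb{E}[v(S\cap T)]$ from being $\Omega(v(S))$ if $j^\ast$ lands in $T$ with noticeable probability. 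The quantity you must control is $\mathbb{E}[v(S\cap T)]$, not $\mathbb{E}[|S\cap T|]$, and subadditivity alone does not convert one into the other. Second, the amortization step does not add up as stated: if each of $O(\log\log m)$ stages of the iterated-halving argument loses a constant multiplicative factor, the compounded loss is $2^{O(\log\log m)} = \mathrm{polylog}(m)$, not $O(\log\log m)$. The dyadic bucketing of the marginals $q_j$ you propose has $L=\Theta(\log m)$ nonempty scales, and averaging over them lands you back at the $O(\log m)$ guarantee you are trying to beat; you acknowledge this, but the ``ramp across scales'' that is supposed to close the gap is left as a wish. The actual mechanism that yields $O(\log\log m)$ is precisely the part that is missing, and without a correct replacement for the false concentration lemma it is not clear the ramp, however weighted, can be made to work.
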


\begin{proof}
    The proof of the above lemma requires formulation of the problem with linear programming, applying strong duality then defining the corresponding $\min$-$\max$ game for the dual and finally lower bounding the value of the game. Thus, we defer the interested reader to \cite{loglog}.
\end{proof}

\begin{theorem}
    For subadditive combinatorial auctions, there is an $O(\log \log m)$-competitive posted price mechanism that uses static anonymous item prices.
\end{theorem}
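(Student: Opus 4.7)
The plan is to lift the complete-information pointwise guarantee of Lemma \ref{subgood} to a Bayesian posted-price mechanism with static anonymous item prices, via the standard ghost-sample construction paired with a revenue-plus-utility welfare decomposition. First I would draw an independent copy $\vb{v}' \sim \mathcal{D}$ of the valuation profile, compute the ghost-optimal allocation $(U_1',\dots,U_n') = \textsf{OPT}(\vb{v}')$, and for each bidder $i$ invoke Lemma \ref{subgood} on the pair $(v_i', U_i')$ to obtain per-item prices $\pi_j^{(i,\vb{v}')}$ together with a distribution $\delta^{(i,\vb{v}')}$ over $2^{U_i'}$. The static anonymous price of item $j$ is then declared to be
$$p_j \;=\; \frac{1}{c}\,\mathbb{E}_{\vb{v}'}\!\left[\sum_{i=1}^n \pi_j^{(i,\vb{v}')}\,\mathds{1}[j \in U_i']\right]$$
for a suitable constant $c$. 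This depends only on $\mathcal{D}$, is item-additive, and is common to all buyers, hence static, anonymous, and per-item.

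Second, when bidder $i$ arrives they purchase their utility-maximizing available set, so their utility is at least that of the suboptimal strategy of drawing $S \sim \delta^{(i,\vb{v}')}$ and buying $S \setminus T_i$, where $T_i$ is the random set of items already sold when $i$ arrives. Applying Lemma \ref{subgood} with $T = T_i \cap U_i'$ produces a pointwise inequality relating the price of items ``stolen'' from the ghost optimum, the expected utility from the randomized strategy, and $v_i'(U_i')/\alpha$. Since $v_i$ and $v_i'$ are i.i.d.\ and both independent of $T_i$ (which is measurable with respect to $\vb{v}_{<i}$), the classical swap $v_i \leftrightarrow v_i'$ converts $v_i'(U_i')$ into $v_i(U_i)$ inside the expectation. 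Summing over $i$ and taking total expectation, the stolen-price terms are charged against a bounded multiple of the mechanism's expected revenue while the randomized-strategy utilities lower-bound its expected utility, yielding
$$\mathbb{E}[\vb{v}(\textsf{ALG}(\vb{v}))] \;\geq\; \frac{1}{(1+\beta)\alpha}\,\mathbb{E}[\vb{v}(\textsf{OPT}(\vb{v}))]$$
for an absorbed constant $\beta$, and therefore an $O(\alpha) = O(\log\log m)$ competitive ratio.

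The main obstacle is exactly the subtlety flagged in footnote 4: Lemma \ref{subgood} guarantees prices $\pi_j^{(i,\vb{v}')}$ tailored to each realization of $\vb{v}'$, but the posted mechanism is forced to use the single aggregated quantity $p_j$. Transferring the lemma's bound from $\pi^{(i,\vb{v}')}$ to $p$ is not a routine linearity step, because the distribution $\delta^{(i,\vb{v}')}$ is itself sample-dependent; resolving this is where the authors of \cite{loglog} reopen the min-max game at the heart of Lemma \ref{subgood} and re-lower-bound its value with the expected prices in place of the instance prices, absorbing the resulting slack into the constant $c$. A secondary difficulty is that $T_i$ is a complicated function of earlier bidders' utility-maximizing choices, which is handled by conditioning on $\vb{v}_{<i}$ so that $T_i$ becomes deterministic while $v_i$ and $\vb{v}'$ retain their independence.
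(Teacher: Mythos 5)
Your proposal aims at the Bayesian version of the theorem, whereas the paper's own proof sketch is explicitly for the complete-information version --- the text preceding Lemma \ref{subgood} says the lemma and theorem are ``presented in the complete information case,'' and footnote 4 defers the Bayesian argument to Appendix A and B of \cite{loglog}. In the complete-information case the proof is short: $\vb{v}$ is known, so one sets $U_i = \textsf{OPT}_i(\vb{v})$, applies Lemma \ref{subgood} once per bidder, lower-bounds bidder $i$'s utility by the randomized strategy of drawing $S \sim \delta$ and buying $S \setminus T_i$, adds revenue, and the lemma's inequality sums to $\vb{v}(\textsf{OPT}(\vb{v}))/\alpha$ --- no ghost sampling, no averaged prices, no swap trick. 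Your ghost-sample prices and the $v_i \leftrightarrow v_i'$ argument are the right ingredients for the Bayesian extension, but they constitute a genuinely different (and substantially harder) proof than the one the paper sketches.

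You correctly name the central obstacle --- Lemma \ref{subgood}'s prices and distribution are jointly tailored to each realization of $\vb{v}'$ while the posted mechanism must commit to the aggregated $p_j$, and the sample-dependence of $\delta^{(i,\vb{v}')}$ blocks a simple linearity argument --- but you do not resolve it; noting that ``the authors reopen the min-max game and re-lower-bound its value'' is precisely the step footnote 4 flags as requiring further care, so the gap remains open in your write-up. Two smaller loosenesses: (i) the swap $v_i \leftrightarrow v_i'$ turns $v_i'(\textsf{OPT}_i(\vb{v}'))$ into $v_i(\textsf{OPT}_i(v_i, \vb{v}'_{-i}))$, which equals $v_i(U_i)$ only in expectation because $(v_i, \vb{v}'_{-i})$ has the same law as $\vb{v}$, not pointwise as ``converts $v_i'(U_i')$ into $v_i(U_i)$'' suggests; (ii) after the swap, the tailored prices and $\delta$ come to depend on $v_i$ through the new optimum, which re-exposes rather than sidesteps the aggregation issue. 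If the goal is to match the theorem the paper actually proves, the complete-information route is what was intended.
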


\begin{proof}
    This proof sketch uses a similar outline as the proof presented in the Balanced Price section, where we lower bound the optimal welfare as the sum of revenue and utility for the auction. In order to do so, we consider the optimal allocation $\textsf{OPT}(\vb{v}) = (U_1, \dotsm U_n)$ with respect to fixed valuation profile $\vb{v}$. Next, for each person $i \in N$ and item $j \in U_i$, consider the prices $p_j$ promised by Lemma \ref{subgood}. Let $\delta$ be the probability distribution over $S \subseteq U_i$ promised by Lemma \ref{subgood}. At this point, we lower bound utility while allowing person $i \in N$ to purchase some $S \subseteq U_i$ paying its respective price, applying Lemma \ref{subgood}, and summing over all bidders. Revenue is just the sum of the prices of items that were sold. Adding the two lower bounds proves the desired lower bound on welfare.
\end{proof}

\section{Main Result from \cite{constant}}

So far, we have seen two constructive posted price mechanisms that achieve $O(\log m)$ and $O(\log \log m)$ competitive ratios respectively. In this section, we will explore the newest result from \cite{constant} which proves the existence of an $O(1)$ competitive ratio in subadditive combinatorial auctions.




\subsection{Random Score Generators}

    \begin{definition}[Random Score Generators]
    A random score generator (RSG) for a person $i \in N$ is a function $\mathcal{S}_i: V_i \rightarrow \Delta(\mathbb{R}_{\geq 0}^M)$, which takes some valuation $v_i \in V_i$ and outputs a distribution $\mathcal{S}_i(v_i)$.
\end{definition}

Such a definition is useful since when valuation $v_i$ is revealed, we can sample values from the distribution: $b_i = \{b_{i,j}\}_{j = 1}^m \sim \mathcal{S}_i(v_i)$ where $\{b_{i,j}\}_{j = 1}^m$ can be interpreted as the per item scores that person $i$'s valuation function assigns each item $j \in M$ and $b_i$ is a vectorized representation in $\mathbb{R}^m$.

    \begin{definition}[Instance of Random Score Generators] We define an Instance of Random Score Generators (IRSG) $\{\mathcal{S}_i\}_{i = 1}^n$ to be an arbitrary collection of score generators for each person $i \in N$.
\end{definition}

We will use such instances for the algorithm presented as well as its analysis. In fact, they provide a constructive $24$-approximation for XOS valuations \cite{constant}, looser than the current tightest bound.

\subsection{Algorithm}

The key algorithm we present in this survey takes advantage of sampling from the given distributions to create imaginary valuations and scores, using them as thresholds of sorts. It's a neat trick and used across literature in this domain. Before getting into the algorithm, we first define for all $i \in N$:
$$R_i := M \setminus \cup_{j = 1}^{i - 1}x_j$$
In other words, $R_i$ represents the set of items still available when person $i$ arrives, which is essential to generate a feasible allocation. Given the product distribution $\mathcal{D}$ and IRSG $\{\mathcal{S}_i\}_{i = 1}^n$, we consider the following algorithm:

\begin{tcolorbox}
    \begin{example}[Correa and Cristi Algorithm] Input $(\mathcal{D}, \{\mathcal{S}\}_{i = 1}^n)$
        \begin{enumerate}
        \begin{spacing}{0.7}
            \item $\texttt{Initialize $x_i = \varnothing$ for all $i \in N$}$
            \item $\texttt{Sample $v_i' \sim \mathcal{D}_i$ and sample $b_i' \sim \mathcal{S}_i(v_i')$ for all $i \in N$}$
            \item $\texttt{Set $p_j' := \max_{i \in N} b_{i,j}'$ for all $j \in M$}$
            \item $\texttt{When agent $i$ arrives, observe $v_i$ and draw $b_i \sim \mathcal{S}_i(v_i)$}$
            \item $\texttt{Consider the set $S_i = \{j| b_{i,j} > p_j'\}$, and award person $i$ the set $x_i = R_i \cap S_i$}$
            \item \texttt{Repeat step $4$ and $5$ for all $i \in N$}
        \end{spacing}
        \end{enumerate} 
    \end{example}
\end{tcolorbox}

Thus, the algorithm with respect to the valuation profile $\vb{v}$ returns the allocation $\textsf{ALG}(\vb{v}) = (x_1, \dots, x_n)$ and achieves a welfare of $\vb{v}(\textsf{ALG}(\vb{v}))$.

\subsection{Mirror Lemma}
This lemma is proven by Correa and Cristi and provides a lower bound on their algorithm by considering special sets. Before stating the lemma, we first sample $b_i \sim \mathcal{S}_i(v_i)$ for all $i \in N$. Next, we sample two copies that are i.i.d. to the revealed $v_i$, i.e., $v_i', v_i'' \sim \mathcal{D}_i$ and scores $b_i' \sim \mathcal{S}_i(v_i'), b_i'' \sim \mathcal{S}_i(v_i'')$. Define $p_j = \max_{i \in N}b_{i,j}$, $p_j'$ and $p_j''$ analogously for all $j \in M$. Lastly, we define special sets $W_i = \{j | b_{i,j} > \max\{p_j', p_j''\}\}$. Then, for any IRSG $\{\mathcal{S}_i\}_{i = 1}^n$, we have the following lemma:

\begin{lemma}[Mirror Lemma]
\label{Mirror}
    Given a subadditive valuation profile $\vb{v} = (v_i)_{i \in N}$ we have:
    $$\mathbb{E}[\vb{v}(\textsf{ALG}(\vb{v}))] \geq \frac{1}{2} \cdot \sum_{i\in N}\mathbb{E}[v_i(W_i)]$$
\end{lemma}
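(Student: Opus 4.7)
The plan is to exploit the symmetry between the primed and double-primed imaginary samples by introducing a \emph{mirror run} of the algorithm that uses $(v_i'', b_i'')_{i \in N}$ in place of $(v_i', b_i')_{i \in N}$ to set its thresholds. Denote by $\tilde{x}_i$ the allocation this mirror run produces for agent $i$, and let $\tilde{R}_i$ and $\tilde{S}_i = \{j : b_{i,j} > p_j''\}$ be the associated available and desired sets. Because $(v_i', b_i')_{i \in N}$ and $(v_i'', b_i'')_{i \in N}$ are i.i.d.\ conditional on the real data, the mirror run has the same expected welfare as the original, so the lemma reduces to proving
\[
\sum_{i \in N} \mathbb{E}[v_i(W_i)] \;\leq\; \mathbb{E}\Bigl[\sum_{i} v_i(x_i)\Bigr] + \mathbb{E}\Bigl[\sum_{i} v_i(\tilde{x}_i)\Bigr].
\]

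For each $i$, first I would note that $W_i \subseteq S_i \cap \tilde{S}_i$, so items in $W_i$ are demanded by $i$ in both runs; the natural pointwise target is $W_i \subseteq x_i \cup \tilde{x}_i$, which together with monotonicity and subadditivity of $v_i$ would give $v_i(W_i) \leq v_i(x_i) + v_i(\tilde{x}_i)$. To isolate the difficulty, split by availability in the real run:
\[
v_i(W_i) \;\leq\; v_i(W_i \cap R_i) + v_i(W_i \setminus R_i) \;\leq\; v_i(x_i) + v_i(W_i \setminus R_i),
\]
where the first inequality is subadditivity and the second uses $W_i \cap R_i \subseteq S_i \cap R_i = x_i$.

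The hard part will be controlling $v_i(W_i \setminus R_i)$, the value of items $i$ wants in both runs but that have already been grabbed in the real run. The inclusion $W_i \setminus R_i \subseteq \tilde{x}_i$ is \emph{not} pointwise true: the real score $b_{k,j}$ of an earlier agent $k < i$ can exceed both $p_j'$ and $p_j''$, so the same item can be snatched in both runs. This is where the mirror symmetry does the real work: interchanging primed and double-primed leaves $W_i$ invariant but exchanges $R_i$ with $\tilde{R}_i$, so $\mathbb{E}[v_i(W_i \setminus R_i)] = \mathbb{E}[v_i(W_i \setminus \tilde{R}_i)]$. I would then set up a charging argument that, on each sample path, matches the contribution of $W_i$ items missing from one run against items captured by $i$ in the other; done carefully across agents and leveraging the joint symmetric distribution of the three score profiles, this yields $\sum_i \mathbb{E}[v_i(W_i \setminus R_i)] \leq \mathbb{E}[\sum_i v_i(\tilde{x}_i)]$. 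Combining with the summed decomposition and the mirror symmetry $\mathbb{E}[\sum_i v_i(\tilde{x}_i)] = \mathbb{E}[\vb{v}(\textsf{ALG}(\vb{v}))]$ then gives $\sum_i \mathbb{E}[v_i(W_i)] \leq 2 \mathbb{E}[\vb{v}(\textsf{ALG}(\vb{v}))]$, which is exactly the claim.
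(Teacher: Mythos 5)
Your decomposition $v_i(W_i) \leq v_i(x_i) + v_i(W_i \setminus R_i)$ is sound, and your observation that swapping the primed and double-primed samples fixes $W_i$ while exchanging $R_i$ with $\tilde R_i$ (hence $\mathbb{E}[v_i(W_i \setminus R_i)] = \mathbb{E}[v_i(W_i \setminus \tilde R_i)]$) is correct. But the charging step you defer to --- $\sum_i \mathbb{E}[v_i(W_i \setminus R_i)] \le \mathbb{E}[\sum_i v_i(\tilde x_i)]$ --- is the entire content of the lemma, and there is a structural obstruction to it that ``care'' will not overcome. After using the symmetry you are comparing $v_i(W_i \setminus \tilde R_i)$ to $v_i(\tilde x_i)$, yet these sets are \emph{disjoint}: $\tilde x_i \subseteq \tilde R_i$ while $W_i \setminus \tilde R_i$ lies entirely outside $\tilde R_i$. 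So there is no per-agent containment to exploit. Shifting the charge across agents does not help either: an item $j \in W_i \setminus \tilde R_i$ was claimed by some $k<i$ in the mirror run, so it contributes to $v_k(\tilde x_k)$, not $v_i(\tilde x_i)$, and there is no reason $v_i$ restricted to $\{j\}$ is dominated by $v_k$ restricted to $\{j\}$. Your own diagnosis --- that $b_{k,j}$ can exceed both $p_j'$ and $p_j''$ --- also means $R_i \cup \tilde R_i \neq M$ in general, which blocks the obvious subadditivity rescue.

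The paper avoids this by never working with the per-agent sets $R_i$, $\tilde R_i$. It first replaces the real pair $(v_i, b_i)$ with an i.i.d.\ copy $(v_i'', b_i'')$ (legal because both are independent of $R_i$), so that the target set becomes $W_i'' = \{j : b_{i,j}'' > \max\{p_j, p_j'\}\}$, which is invariant under swapping the \emph{real} scores with the \emph{primed} scores --- a different reflection from the primed$\,\leftrightarrow\,$double-primed one you use. It then passes from $R_i$ to the global set $R = \{j : p_j' \ge p_j\}$ of items never sold at any stage, using $R \subseteq R_i$ and monotonicity. Its real$\,\leftrightarrow\,$primed mirror $\widetilde R = \{j : p_j \ge p_j'\}$ is identically distributed to $R$ jointly with $W_i''$, and --- this is the algebraic fact your route lacks --- $R \cup \widetilde R = M$ holds deterministically. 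Subadditivity over that cover gives $\mathbb{E}[v_i''(W_i'' \cap R)] + \mathbb{E}[v_i''(W_i'' \cap \widetilde R)] \ge \mathbb{E}[v_i''(W_i'')]$, which is exactly where the factor of two comes from. Without an analogue of ``the two available sets cover $M$,'' your remainder term $v_i(W_i \setminus R_i)$ has no handle.
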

\begin{proof}
We recall from Correa and Cristi's algorithm that $\mathbb{E}[\vb{v}(\textsf{ALG}(\vb{v}))] = \sum_{i \in N}\mathbb{E}[v_i(R_i \cap S_i)]$, where $R_i$ is the set of remaining items when person $i$ arrives and $S_i = \{j | b_{i,j} > p_j'\}$. We also realize that $R_i$ is independent of the valuation and scores of person $i$, since it is purely dependent on the valuation and scores of people who arrived prior to person $i$. Furthermore, we define an i.i.d. copy of $S_i$ to be $S_i'' = \{j | b_{i,j}'' > p_j'\}$. Using that the joint distribution of the pairs $(v_i, b_i)$ and $(v_i'', b_i'')$ are identical, we write $\mathbb{E}[v_i(S_i \cap R_i)] = \mathbb{E}[v_i''(S_i'' \cap R_i)]$.

Now, we define $R = \{j | p_j' \geq p_j\}$, which is the set of items remaining untouched at the end of the algorithm. It is obvious that for any $i \in N$,  $R \subseteq R_i$. Further, we define $W_i'' = \{j | b_{i,j}'' > \max\{p_j, p_j'\}\}$ and notice that $W_i'' \subseteq S_i''$ since they share the constraint $b_{i,j}'' > p_j'$, except $W_i''$ has an additional constraint of being greater than $p_j$. Lastly, define $\widetilde{R} = \{j | p_j \geq p_j'\}$, which is identically distributed as $R$. This is because swapping $v_i$ with $v_i'$ for all $i \in N$ swaps $p_j$ and $p_j'$ for all $j \in M$. Therefore, since $v_i$ and $v_i'$ are i.i.d., $R$ and $\widetilde{R}$ are i.i.d.. Further, we may see that swapping $v_i$ and $v'_i$ for all $i$ changes nothing about $\max\{p_j,p'_j\}$ for any $j$, and therefore does not change $W''_i$ for any $i$. Thus, we conclude the sets $W''_i \cap R$ and $W''_i \cap \tilde{R}$ are identically distributed. Now we write:
\begin{align*}
    \mathbb{E}[v_i(S_i \cap R_i)]&= \mathbb{E}[v_i''(S_i'' \cap R_i)]\\
    &\geq \mathbb{E}[v_i''(W_i'' \cap R)] 
    = \mathbb{E}[v_i''(W_i'' \cap \widetilde{R})] 
\end{align*}

\noindent The second follows since $W_i'' \subseteq S_i''$, $R \subseteq R_i$ and we assume monotone valuations. The last equality holds since $v_i''(W_i'' \cap R)$ and $v_i''(W_i'' \cap \widetilde{R})$ are identically distributed as shown previously. Now, clearly, we have that $R \cup \widetilde{R} = M$. Thus, using subadditivity, we can write:
$$\mathbb{E}[v_i''(W_i'' \cap \widetilde{R})] +\mathbb{E}[v_i''(W_i'' \cap R)] \geq \mathbb{E}[v_i''(W_i'' \cap M)] = \mathbb{E}[v_i''(W_i'')]$$

\noindent Now we combine our previous results, to write:
\begin{align*}
    \sum_{i \in N}\mathbb{E}[v_i(S_i \cap R_i)] &\geq \sum_{i \in N}\mathbb{E}[v_i''(W_i'' \cap \widetilde{R})]\\
    &= \sum_{i \in N}\frac{1}{2}\left(\mathbb{E}[v_i''(W_i'' \cap \widetilde{R})] + \mathbb{E}[v_i''(W_i'' \cap \widetilde{R})]\right)\\
    &= \sum_{i \in N}\frac{1}{2}\left(\mathbb{E}[v_i''(W_i'' \cap \widetilde{R})] + \mathbb{E}[v_i''(W_i'' \cap R)]\right) \geq \sum_{i \in N}\frac{1}{2}\left(\mathbb{E}[v_i''(W_i'')]\right) = \sum_{i \in N}\frac{1}{2}\left(\mathbb{E}[v_i(W_i)]\right)\\
\end{align*}
\end{proof}

\subsection{Proof of Theorem for Subadditive Valuations}

Throughout this proof, assume that $V_i$ is a finite set for all $i \in N$. We state the main existence theorem:

\begin{theorem}
\label{big guy}
For every $\varepsilon >0$, if all valuations are subadditive, there exists an IRSG such that

$$(6 + \varepsilon) \cdot \mathbb{E}[\vb{v}(\textsf{ALG}(\vb{v}))] \geq \mathbb{E}[\vb{v}(\textsf{OPT}(\vb{v}))] $$
    
\end{theorem}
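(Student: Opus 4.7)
The plan is to combine the Mirror Lemma with a carefully chosen IRSG. By Lemma~\ref{Mirror}, it suffices to exhibit an IRSG $\{\mathcal{S}_i\}_{i=1}^n$ for which
$$\sum_{i \in N} \mathbb{E}[v_i(W_i)] \;\geq\; \frac{2}{6+\varepsilon}\,\mathbb{E}[\vb{v}(\textsf{OPT}(\vb{v}))].$$
Decomposing the optimal welfare as $\mathbb{E}[\vb{v}(\textsf{OPT}(\vb{v}))] = \sum_i \mathbb{E}[v_i(U_i)]$ with $(U_1,\ldots,U_n) = \textsf{OPT}(\vb{v})$, the target becomes to arrange for each agent's expected value $\mathbb{E}[v_i(W_i)]$ on its ``winner set'' to collectively recover a $(1/3 - O(\varepsilon))$-fraction of $\sum_i \mathbb{E}[v_i(U_i)]$.

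First I would think of the RSG $\mathcal{S}_i(v_i)$ as a rule that, on input $v_i$, hallucinates a fresh profile $\tilde{\vb{v}}_{-i} \sim \mathcal{D}_{-i}$, reads off the optimal piece $\tilde U_i$ that would be allocated to $i$ in the hallucinated instance, and outputs scores $b_i$ supported on $\tilde U_i$ whose magnitudes are calibrated to $v_i(\tilde U_i)$. As a sanity check, if $v_i$ were XOS one could just set $b_{i,j}$ to equal the value of the additive support that attains $v_i(\tilde U_i)$ on item $j$; a short argument then lower bounds $\Pr[j \in W_i]$ and recovers a constant-factor guarantee, in fact the 24-approximation for XOS mentioned earlier in the section. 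For genuinely subadditive $v_i$ no single additive support works, so the scores must be drawn from a distribution over additive-like ``pseudo-supports'' of $v_i$ on $\tilde U_i$.

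The heart of the argument is an existence proof via minimax. Because $V_i$ is finite, the collection of IRSGs can be parameterized as a compact convex subset of a finite-dimensional product of simplices. I would introduce a two-player zero-sum game in which the designer picks the IRSG and an adversary picks, for each $i$, a distribution over subsets $T_i \subseteq U_i$ representing items ``stolen'' before agent $i$'s turn; the payoff would be a linear surrogate for $\sum_i \mathbb{E}[v_i(W_i \setminus T_i)]$ normalized by $\sum_i \mathbb{E}[v_i(U_i)]$. Sion's minimax theorem then permits swapping the order of optimization, and in the swapped game I would bound the inner best response by combining subadditivity of $v_i$ with the two i.i.d.\ copies built into $W_i$ -- the same structural device that yielded the $1/2$ factor in the Mirror Lemma via $R \cup \tilde R = M$ -- to produce a constant rather than the logarithmic or doubly-logarithmic factors of prior work. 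The additive $\varepsilon$ absorbs a discretization of the game and a passage from a supremum to an attained maximum.

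The principal obstacle, and the reason the statement is phrased as ``there exists an IRSG'' rather than ``the IRSG $\mathcal{S}$ defined as follows,'' is the global coupling in the winning condition $b_{i,j} > \max\{p_j', p_j''\}$: the quantities $p_j'$ and $p_j''$ are maxima over i.i.d.\ copies of \emph{all} agents' scores, so an aggressive RSG for agent $i$ also raises the bar that agent $i$ must clear. Any direct per-agent construction therefore runs into a circular dependency between the agents' score distributions, and the minimax step only delivers the IRSG as an abstract fixed point. This is precisely why, unlike the $O(\log m)$ and $O(\log \log m)$ mechanisms surveyed earlier, the resulting scheme cannot be cast as a posted-price rule.
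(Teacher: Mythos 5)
Your high-level scaffolding matches the paper: you invoke the Mirror Lemma to reduce the claim to showing $\sum_i \mathbb{E}[v_i(W_i)] \geq \frac{2}{6+\varepsilon}\,\mathbb{E}[\vb{v}(\textsf{OPT}(\vb{v}))]$, you correctly anticipate that the per-agent guarantee should recover roughly a $1/3$-fraction via a symmetrization-plus-subadditivity argument, and you correctly diagnose the obstacle as the self-referential coupling through $p'_j, p''_j$. Those are the right landmarks. However, the tool you propose for crossing the gap is the wrong one, and the mismatch is not cosmetic. You frame the existence step as a two-player zero-sum game between the designer of the IRSG and an adversary who, for each $i$, stochastically ``steals'' subsets $T_i \subseteq U_i$, resolved by Sion's minimax theorem. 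This is essentially the LP-duality game used in the $O(\log\log m)$ analysis (Lemma~\ref{subgood}), where the adversary's stolen sets model lost items; running that machinery again would at best reproduce the doubly-logarithmic bound, not a constant, because the adversary model says nothing about \emph{who set the thresholds}. The coupling you identify is not between the designer and an adversary -- it is between the designer's choice and itself: the same IRSG that produces the candidate scores $b_i$ also, through i.i.d.\ draws, produces the thresholds $p'_j,p''_j$ that those scores must clear. This is a consistency requirement, not an equilibrium of a two-strategy-set game, and Sion's theorem does not close the loop.

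The paper's actual route is a Kakutani fixed-point argument on a best-response correspondence $\Phi:\mathcal{L}\to 2^{\mathcal{L}}$ on the finite-dimensional compact convex set $\mathcal{L}$ of IRSGs supported on the discretized grid $B_\varepsilon$: given a candidate $\mathcal{S}$ (hence a price distribution), $\Phi(\mathcal{S})$ collects all IRSGs whose scores satisfy, pointwise for every $i$, $v\in V_i$, and $X\subseteq M$, the inequality $\mathbb{E}[v(\{j: f_j > \max\{p'_j,p''_j\}\})]-f(M) \geq \tfrac{1}{3}v(X)-\mathbb{E}[p'(X)]-\varepsilon|X|$. The $1/3$ itself comes from a three-way symmetrization with a third i.i.d.\ copy $p'''$ (Lemma~\ref{helper 1}), which you gesture at but do not nail down -- you describe ``the two i.i.d.\ copies built into $W_i$,'' but the crucial move is introducing a \emph{third} copy and observing that for every item one of the three is the pointwise maximum, so the union of the three ``winning'' sets is $M$ and subadditivity yields the factor 3. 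Kakutani then gives $\mathcal{S}^*\in\Phi(\mathcal{S}^*)$, i.e.\ an IRSG that performs well against the very prices it induces. A Sion-style min-max swap gives you a value, not a self-consistent object; the nonconstructivity in the paper traces to Kakutani, not to duality (indeed the $O(\log\log m)$ paper used duality and \emph{was} constructive). To repair the proposal you would need to replace the designer-vs-adversary game with the correspondence $\Phi$, verify the conditions of Kakutani (nonemptiness via Lemma~\ref{helper 1}, convexity and compactness of $\Phi(\mathcal{S})$ from the linearity of the constraints in the distribution, upper hemicontinuity from continuity of the price distribution map), and then finish exactly as you outline: set $X=\textsf{OPT}_i(\vb{v})$, use $\sum_i\mathbb{E}[b_i(M)]\geq\mathbb{E}[p'(M)]$, and tune $\delta$ to absorb the $\varepsilon|M|$ slack.
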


\noindent Before proving this theorem, we define important terms as well as well-known theorems for clarity.

\begin{definition}
    For any set $S$, we let $\Delta(S)$ be the set of probability distributions on it.
\end{definition}

\begin{definition}[Convexity \cite{hemi}] A set is convex if it includes the line segments joining any two of its points. Or in other words, a set $C$ is convex if whenever $x, y \in C$, the line segment $\{\alpha x + (1 - \alpha)y : \alpha \in [0, 1]\}$ is included in $C$. 
\end{definition}

\begin{theorem}[Heine-Borel Theorem \cite{hemi}] \label{Heine}
A subset of $\mathbb{R}^n$ is compact if and only if it is closed and bounded.
\end{theorem}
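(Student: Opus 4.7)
The plan is to prove the two directions separately, since only one of them carries real analytic content. For the ($\Rightarrow$) direction, suppose $K \subseteq \mathbb{R}^n$ is compact. To get boundedness, I would apply compactness to the open cover $\{B(0,k)\}_{k \in \mathbb{N}}$ of $\mathbb{R}^n$ (hence of $K$); a finite subcover is contained in $B(0,N)$ for $N$ the largest index, so $K \subseteq B(0,N)$. To get closedness, I would show $\mathbb{R}^n \setminus K$ is open: given $x \notin K$, separate $x$ from each $y \in K$ by disjoint balls $B(y, d(x,y)/2)$ and $B(x, d(x,y)/2)$ (using that $\mathbb{R}^n$ is Hausdorff); compactness applied to the open cover $\{B(y, d(x,y)/2)\}_{y \in K}$ of $K$ yields finitely many centers $y_1,\dots,y_k$, and then $B\bigl(x, \min_i d(x,y_i)/2\bigr)$ lies entirely in $\mathbb{R}^n \setminus K$.

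For the ($\Leftarrow$) direction, the strategy is to reduce to the single statement that the closed box $[-R,R]^n$ is compact, and then invoke the elementary topological lemma that a closed subset of a compact space is compact. Concretely, if $K$ is bounded then $K \subseteq [-R,R]^n$ for some $R$, and $K$ closed in $\mathbb{R}^n$ forces $K$ closed inside $[-R,R]^n$; the lemma then finishes. So the real work is reduced to compactness of $[-R,R]^n$. I would prove this in two steps: first, show $[0,1]$ is compact in $\mathbb{R}$; second, lift to a finite product, either by a direct argument (given an open cover of $[-R,R]^n$, slice along one coordinate and use compactness coordinatewise) or by invoking that finite products of compact spaces are compact.

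The crux of the proof, and the main obstacle, is compactness of $[0,1]$, since that is the one step that genuinely uses the completeness of $\mathbb{R}$. My plan here is the classical supremum argument: fix an arbitrary open cover $\mathcal{U}$ of $[0,1]$ and set
\[
S = \{\, x \in [0,1] : [0,x] \text{ is covered by finitely many sets from } \mathcal{U} \,\}.
\]
Then $0 \in S$ so $S$ is nonempty, and $S$ is bounded above by $1$, so $s := \sup S$ exists by the least upper bound property. Pick $U \in \mathcal{U}$ with $s \in U$ and choose $\varepsilon > 0$ with $(s-\varepsilon, s+\varepsilon) \cap [0,1] \subseteq U$. Since $s - \varepsilon$ is not an upper bound, some $x \in S$ lies in $(s - \varepsilon, s]$; adjoining $U$ to a finite subcover of $[0,x]$ gives a finite subcover of $[0, \min(s+\varepsilon/2, 1)]$. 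This both forces $s \in S$ and, unless $s = 1$, contradicts the definition of $s$; hence $s = 1 \in S$, giving a finite subcover of $[0,1]$.

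Once these pieces are in place, the two directions fit together immediately and yield the biconditional. The topological lemmas used (Hausdorff separation, closed-in-compact-is-compact, finite products of compacts) are standard and I would either state them in one line or cite them; the genuine content the proof has to supply is the supremum argument for $[0,1]$ together with the cover-by-balls trick for boundedness.
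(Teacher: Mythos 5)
Your proposal is a correct and complete rendition of the standard textbook proof of Heine--Borel, but note that the paper itself supplies no proof here: Theorem~\ref{Heine} is stated as cited background from \cite{hemi}, a classical fact invoked later only to establish compactness of $\mathcal{L}$ in Lemma~\ref{helper 2}. So there is no ``paper's own proof'' to compare against; the relevant question is simply whether your argument is sound, and it is.

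Both directions check out. For ($\Rightarrow$), the cover-by-balls argument for boundedness and the Hausdorff-separation argument for closedness are the standard ones and are stated correctly. For ($\Leftarrow$), the reduction to the compactness of the box $[-R,R]^n$ via ``closed subset of compact is compact'' is sound, and your supremum argument for $[0,1]$ is carried out carefully: the key point, that some $x\in S\cap(s-\varepsilon,s]$ exists and that $[x,\min(s+\varepsilon/2,1)]$ lies inside the single chart $U$, does give a finite subcover of $[0,\min(s+\varepsilon/2,1)]$, which both puts $s$ in $S$ and forces $s=1$. The only thing you leave unspecified is the lift from $[0,1]$ to $[-R,R]^n$; either of the two routes you name (coordinatewise slicing, or finite products of compacts are compact) works, though if you go the slicing route you should be aware it needs the tube lemma or an equivalent bookkeeping step. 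None of this affects correctness, and for the purposes of this survey one would ordinarily just cite the theorem rather than prove it.
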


\begin{definition}[Upper Hemicontinuity \cite{hemi}] A correspondence $\Phi: X \to Y$ is upper hemicontinuous at point $x$ if for every neighborhood $U$ of $\Phi(x)$, there is a neighborhood $V$ of $x$ such that $z \in V$ implies $\Phi(z) \subset U$
\end{definition}

\begin{theorem}[Kakutani's Fixed Point Theorem \cite{kakutani}]
Let $X$ be a non-empty, compact, and convex subset of $\mathbb{R}^n$, a Euclidean space. We also let $\Phi: X \to 2^X$ be an upper hemicontinuous mapping such that $\forall x \in S$, $\Phi(x)$ is non-empty, closed, and convex. Then there exists a point $x \in X$ such that $x \in \Phi(x)$. That is, $\Phi$ has a fixed point. 
\end{theorem}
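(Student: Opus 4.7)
The plan is to reduce Kakutani's theorem to Brouwer's Fixed Point Theorem (which I take as given, since its proof via Sperner's Lemma or algebraic topology is a substantial development of its own) by constructing piecewise-linear approximations of $\Phi$ and passing to a limit. First, I would fix a sequence of simplicial triangulations $\mathcal{T}_k$ of $X$ whose mesh sizes tend to zero; this is possible because, by the Heine-Borel Theorem (Theorem \ref{Heine}), $X$ is closed and bounded in $\mathbb{R}^n$. At each vertex $v$ of $\mathcal{T}_k$, I pick an arbitrary point $\phi_k(v) \in \Phi(v)$, which is possible because $\Phi(v)$ is non-empty, and extend $\phi_k$ affinely on each simplex of $\mathcal{T}_k$ via barycentric coordinates. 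Since $X$ is convex and each simplex of $\mathcal{T}_k$ lies inside $X$, the resulting $\phi_k : X \to X$ is continuous and takes values in $X$.

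Next, I apply Brouwer's Fixed Point Theorem to each $\phi_k$ to obtain $x_k \in X$ with $\phi_k(x_k) = x_k$. By compactness of $X$, a subsequence (re-indexed as $\{x_k\}$) satisfies $x_k \to x^* \in X$. Each $x_k$ lies in some simplex of $\mathcal{T}_k$ with vertices $v_k^0, \dots, v_k^d$ and barycentric coordinates $\lambda_k^0, \dots, \lambda_k^d$ summing to $1$, and the affine extension gives $x_k = \phi_k(x_k) = \sum_i \lambda_k^i y_k^i$, where $y_k^i := \phi_k(v_k^i) \in \Phi(v_k^i)$. Passing to a further subsequence, I may assume $\lambda_k^i \to \lambda^i$ and $y_k^i \to y^i \in X$ for each $i$; since the mesh of $\mathcal{T}_k$ shrinks to zero and $x_k \to x^*$, we also have $v_k^i \to x^*$ for each $i$.

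The main obstacle, and the only place where upper hemicontinuity is used nontrivially, is to conclude $y^i \in \Phi(x^*)$ for each $i$. For any open neighborhood $U$ of $\Phi(x^*)$, upper hemicontinuity supplies a neighborhood $V$ of $x^*$ with $\Phi(z) \subseteq U$ for every $z \in V$. Since $v_k^i \to x^*$, eventually $v_k^i \in V$, so $y_k^i \in \Phi(v_k^i) \subseteq U$ for all large $k$; combined with $y_k^i \to y^i$, this gives $y^i \in \overline{U}$ for every open neighborhood $U$ of $\Phi(x^*)$. Taking $U$ to be a shrinking sequence of $\varepsilon$-neighborhoods of the closed set $\Phi(x^*)$ in the metric space $\mathbb{R}^n$ then forces $y^i \in \Phi(x^*)$. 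Finally, since $\Phi(x^*)$ is convex and $\sum_i \lambda^i = 1$, the limit $x^* = \sum_i \lambda^i y^i$ is a convex combination of points in $\Phi(x^*)$, hence $x^* \in \Phi(x^*)$, giving the desired fixed point.
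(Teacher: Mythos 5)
The paper does not prove Kakutani's theorem; it is imported verbatim from \cite{kakutani} as a black box, so there is no in-paper argument to compare against. Taken on its own terms, your proof is the standard reduction of Kakutani to Brouwer via piecewise-linear selections, and the heart of it --- the subsequence extraction, the use of upper hemicontinuity to place each limit $y^i$ in the shrinking closed $\varepsilon$-neighborhoods of the compact set $\Phi(x^*)$, and the final convexity step --- is correct and well executed.

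The one genuine gap is the opening claim that Heine--Borel lets you ``fix a sequence of simplicial triangulations $\mathcal{T}_k$ of $X$ whose mesh sizes tend to zero.'' Compactness of $X$ does not give this: a general compact convex subset of $\mathbb{R}^n$ (e.g.\ a closed Euclidean ball) cannot be written as a finite union of affine simplices, so no exact straight-line triangulation of $X$ exists, and with it your definition of $\phi_k$ by barycentric extension breaks down. The standard repair is to first run your argument on an honest simplex: enclose $X$ in a large $n$-simplex $\Delta$, let $r:\Delta\to X$ be the (continuous) nearest-point retraction onto the convex set $X$, and apply your construction to the correspondence $\tilde\Phi(z) := \Phi(r(z))$, which inherits upper hemicontinuity and nonempty/closed/convex values. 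Triangulations of $\Delta$ with mesh $\to 0$ do exist (iterated barycentric subdivision), Brouwer gives fixed points $x_k=\tilde\phi_k(x_k)\in X$, and your limiting argument then yields $x^*\in\tilde\Phi(x^*)=\Phi(r(x^*))$. Since $\tilde\Phi$ takes values in $X$, the fixed point $x^*$ lies in $X$, so $r(x^*)=x^*$ and $x^*\in\Phi(x^*)$ as desired. With this substitution your proof is complete; for the paper's actual application the domain $\mathcal{L}$ is a polytope, so the issue is invisible there, but it matters for the theorem as you have stated it.
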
 

\noindent Now, we prove the main theorem in the remainder of this section. We notate $f(S) = \sum_{j \in S}f_j$ for vector $f \in \mathbb{R}^M$ and set $S \subseteq M$.

\begin{lemma}
\label{helper 0}
    For every $\varepsilon >0$, there exists an IRSG that guarantees that for all $i \in N$:

    $$\mathbb{E} \left[ v_i(W_i)-\sum_{j \in M}b_{i,j} \,\middle\vert\, v_i \right ] \geq \max_{X \subseteq M}\left\{\frac{1}{3} v_i(X) - \mathbb{E}[p'(X)] - \varepsilon \cdot |X |\right\}$$
\end{lemma}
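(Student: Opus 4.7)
The plan is to apply Kakutani's fixed point theorem to the joint space of IRSG strategies $(\mathcal{S}_i(v_i))_{i \in N,\, v_i \in V_i}$, and then close the argument at the fixed point via a single-agent minimax. To make the Kakutani setup work, I would first restrict each score $b_{i,j}$ to a large bounded cube $[0,B]^M$ and to a fine $\eta$-grid inside it; the resulting joint strategy space $K = \prod_{i, v_i} \Delta(G)$, where $G$ is this finite grid, is then a finite product of simplices, hence non-empty, convex, and compact by Theorem \ref{Heine}. The parameters $B$ and $\eta$ will be folded into the $\varepsilon$-slack at the end by a weakly convergent subsequence argument as $\eta \downarrow 0$ and $B \uparrow \infty$.

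Given a profile $\mathcal{S} \in K$, the phantom prices $p'_j = \max_i b'_{i,j}$ (with $v'_i \sim \mathcal{D}_i$ and $b'_i \sim \mathcal{S}_i(v'_i)$), together with the i.i.d.\ copy $p''$, have well-defined joint distributions, so for each agent $i$ and each $v_i \in V_i$ the two-player zero-sum payoff
$$
F_i(\sigma, X;\, \mathcal{S}) \;=\; \mathbb{E}_{b_i \sim \sigma}\!\left[v_i(W_i) - \sum_{j \in M} b_{i,j} \,\middle|\, v_i\right] \;-\; \tfrac{1}{3} v_i(X) \;+\; \mathbb{E}[p'(X)] \;+\; \varepsilon |X|
$$
is well-defined, with the max-player choosing a score distribution $\sigma \in \Delta(G)$ and the min-player choosing $X \subseteq M$. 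Define the correspondence $\Phi : K \to 2^K$ whose value at $\mathcal{S}$ is the product over $(i,v_i)$ of the maximin best-response sets $\argmax_\sigma \min_X F_i(\sigma, X;\, \mathcal{S})$. The Kakutani hypotheses are routine: $F_i$ is linear in $\sigma$, so $\min_X F_i$ is concave in $\sigma$, giving convex values; $F_i$ depends continuously on $\mathcal{S}$ because on the finite grid the joint distribution of $(p',p'')$ is a polynomial function of the simplex coordinates of $\mathcal{S}$, so Berge's maximum theorem delivers upper hemicontinuity with non-empty closed values. Kakutani's theorem then produces a fixed point $\mathcal{S}^*$.

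At $\mathcal{S}^*$ the lemma reduces to proving $\sup_\sigma \min_X F_i(\sigma, X;\, \mathcal{S}^*) \geq 0$ for every $i$ and $v_i$. Because $\sigma$ varies over a convex compact set, $F_i$ is linear in $\sigma$, and $X$ ranges over a finite set, Sion's minimax theorem lets me swap order and reduce to: for each fixed $X \subseteq M$, exhibit a witness $\sigma_X$ with $F_i(\sigma_X, X;\, \mathcal{S}^*) \geq 0$. I would take $\sigma_X$ to be a random-threshold generator supported on $X$ — sample a scalar $t \geq 0$ from an appropriate density and set $b_{i,j} = t \cdot \mathds{1}[j \in X]$. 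This choice gives $W_i \cap X = \{ j \in X : t > \max(p'_j, p''_j) \}$, so a layer-cake / Fubini exchange converts $\mathbb{E}[v_i(W_i)]$ into an integral of $v_i$ against the tail distribution of $\max(p'_j, p''_j)$, while $\mathbb{E}[\sum_j b_{i,j} \mid v_i] = \mathbb{E}[t] \cdot |X|$ is trivially controlled.

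The main obstacle is this final witness calculation: I must tune the threshold density so that subadditivity of $v_i$ turns the layer-cake integral into at least $\tfrac{1}{3} v_i(X) - \mathbb{E}[p'(X)] - \varepsilon |X|$ after subtracting $\mathbb{E}[t] \cdot |X|$. The constant $\tfrac{1}{3}$ is expected to come from the asymmetry between $p'$ (appearing once in $F_i$ through $\mathbb{E}[p'(X)]$) and the pair $(p', p'')$ (appearing twice through $\max(p'_j, p''_j)$ in the definition of $W_i$), with the two ``phantom'' copies accounting for a loss of roughly two-thirds of the potential mass of $v_i(X)$; the additive $\varepsilon |X|$ term absorbs both the grid and truncation error introduced by the initial discretization, and is recovered in the full statement via the weak-limit argument as $\eta \downarrow 0$ and $B \uparrow \infty$.
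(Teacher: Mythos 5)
Your overall blueprint — discretize to a finite grid, set up a correspondence on the space of IRSG profiles, invoke Kakutani, then close the argument at the fixed point by exhibiting a witness score distribution whose payoff is nonnegative — is the same strategy the paper follows (the paper's correspondence $\Phi(\mathcal{S})$ is a ``satisficing'' set rather than an $\argmax$ set, but either is closed, convex, and nonempty once the witness exists, and either choice makes Kakutani applicable). The Sion swap is actually unnecessary overhead: the paper's witness works simultaneously for all $X$ because it is tailored to the \emph{maximizing} set $X^* = \argmax_X \{\tfrac13 v(X) - \mathbb{E}[p'(X)] - \varepsilon|X|\}$, so beating the RHS at $X^*$ beats it everywhere; and as written your minimax step has a gap anyway (showing $\max_\sigma F_i(\sigma,X) \geq 0$ for each pure $X$ does not by itself control $\min_\delta \max_\sigma F_i(\sigma,\delta)$ over mixed $\delta$, which is what Sion gives you).

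The real gap is in the witness construction itself, which is where all the content of the lemma lives. You propose a \emph{scalar} random threshold $b_{i,j} = t \cdot \mathds{1}[j \in X]$, whose cost is $\mathbb{E}[t]\cdot|X|$. But the credit available on the right-hand side is $\mathbb{E}[p'(X)] + \varepsilon|X| = \sum_{j \in X}\mathbb{E}[p'_j] + \varepsilon|X|$, which is an \emph{item-dependent} budget. A single scalar $t$ cannot match a budget that varies across items, so there is no choice of threshold density under which the cost and the credit align. The paper instead draws a third i.i.d.\ phantom price vector $p'''$ and sets, for $j \in X^*$, $\hat{f}_j = \lfloor p'''_j/\varepsilon\rfloor\varepsilon + \varepsilon$ (and $\hat f_j=0$ otherwise), so that $\mathbb{E}[\hat f(M)] \leq \mathbb{E}[p'(X^*)] + \varepsilon|X^*|$ \emph{item by item}. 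Your stated intuition for the constant $\tfrac13$ — ``asymmetry between $p'$ appearing once and $\max(p',p'')$ appearing twice'' — is also not the actual mechanism. The $\tfrac13$ comes from a three-fold exchangeability argument: since $\hat f_j > p'''_j$, one lower-bounds $\mathbb{E}[v(\{j:\hat f_j > \max(p'_j,p''_j)\})]$ by $\mathbb{E}[v(\{j: p'''_j \geq \max(p'_j,p''_j)\} \cap X^*)]$, and because $p'$, $p''$, $p'''$ are i.i.d.\ the three events $\{p'''_j \geq \max(p'_j,p''_j)\}$, $\{p''_j \geq \max(p'_j,p'''_j)\}$, $\{p'_j \geq \max(p''_j,p'''_j)\}$ are exchangeable and their union is all of $M$; subadditivity of $v$ then gives $3\,\mathbb{E}[v(\{\cdot\}\cap X^*)] \geq v(X^*)$. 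Without the item-wise third copy $p'''$ you lose both the budget match and the symmetry that produces the factor $\tfrac13$, so the ``main obstacle'' you flag at the end is not a calculation to be tuned but a construction to be replaced.
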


Note here that we only prove the existence of such an IRSG, as opposed to the constructive version in the XOS case \cite{constant}. To prove this, we will use a fixed point approach. Thus, we require a sequence of lemmas before we have the sufficient conditions to use a fixed point theorem and prove the above lemma. We define $v_{max} = \max_{v \in V_i, i \in N} v(M)$. For a given $\varepsilon >0$, let $B_\varepsilon = \{s \cdot \varepsilon \mid s \in \mathbb{N} \text{ and } s \cdot \varepsilon \leq v_{max}\}^M$, a set of $M$ dimensional vectors. Now, below in Lemma \ref{helper 1}, we show an existence result to establish that a specific set is non-empty which we then use in Lemma \ref{helper 2} where we define a mapping to this set.

\begin{lemma}
\label{helper 1}
    For every subadditive and monotone valuation function $v$: $2^M \to \mathbb{R}_{\geq 0}$ with $v(M) \leq v_{max}$ and every $\varepsilon >0$, if $p'$ and $p''$ are random i.i.d. vectors in $\mathbb{R}_{\geq 0}^M$, there exists a vector $f \in B_\varepsilon$ such that for all $X \subseteq M$:
    $$\mathbb{E} \left[ v(\{j \mid f_{j} > \max\{p_j', p_j''\}\}) \right]- f(M) \geq \frac{1}{3} v(X) - \mathbb{E}[p'(X)] - \varepsilon \cdot |X |$$
\end{lemma}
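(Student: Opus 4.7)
The plan is to reduce the ``for all $X$'' quantifier to a single worst-case $X$ and then build $f$ via a three-way symmetrization trick. Abbreviating $g(f) := \mathbb{E}[v(\{j : f_j > \max(p_j', p_j'')\})] - f(M)$ and $h(X) := \tfrac{1}{3}v(X) - \mathbb{E}[p'(X)] - \varepsilon|X|$, the condition ``$g(f) \geq h(X)$ for every $X$'' is equivalent to $g(f) \geq h(X^*)$, where $X^* := \argmax_X h(X)$. So it suffices to exhibit one $f \in B_\varepsilon$ that beats $X^*$.

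To construct $f$ I would enlarge the probability space with an auxiliary vector $p'''$ drawn i.i.d.\ with $p'$ and $p''$, and define a random $F \in B_\varepsilon$ by $F_j := \min(v_{max},\, \varepsilon\lceil p_j'''/\varepsilon\rceil)$ for $j \in X^*$ and $F_j := 0$ otherwise. The engine is a three-way analogue of the Mirror Lemma: writing $\tilde{p} := \min(p, v_{max})$ coordinate-wise, almost surely partition $X^*$ into $S_0, S_1, S_2$ according to which of $\tilde{p}''', \tilde{p}', \tilde{p}''$ is largest at each coordinate. Subadditivity gives $v(X^*) \leq v(S_0) + v(S_1) + v(S_2)$ pointwise, and exchangeability of the three truncated copies forces $\mathbb{E}[v(S_0)] = \mathbb{E}[v(S_1)] = \mathbb{E}[v(S_2)] \geq \tfrac{1}{3} v(X^*)$. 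A short case analysis shows $S_0 \subseteq S^F := \{j : F_j > \max(p_j', p_j'')\}$: on $S_0$ we have $\tilde{p}_j''' \leq v_{max}$ dominating $\max(\tilde{p}_j', \tilde{p}_j'')$, which forces $p_j', p_j'' < v_{max}$, and hence $F_j \geq \tilde{p}_j''' > \max(p_j', p_j'')$. Monotonicity lifts this to $\mathbb{E}[v(S^F)] \geq \tfrac{1}{3} v(X^*)$, while $\varepsilon\lceil x/\varepsilon\rceil \leq x + \varepsilon$ gives $\mathbb{E}[F(M)] \leq \mathbb{E}[p'(X^*)] + \varepsilon|X^*|$. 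Subtracting yields $\mathbb{E}_F[g(F)] \geq h(X^*)$, and the probabilistic method extracts a deterministic realization $f \in B_\varepsilon$ with $g(f) \geq h(X^*) \geq h(X)$ for every $X$.

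The main obstacle is the $v_{max}$-cap that defines $B_\varepsilon$. Without the coordinate-wise truncation $\tilde{p}$, the inclusion $S_0 \subseteq S^F$ can break precisely when both $p_j'''$ and $\max(p_j', p_j'')$ exceed $v_{max}$, because then $F_j = v_{max}$ no longer dominates. Truncating all three i.i.d.\ copies simultaneously keeps them exchangeable (so the three-way symmetry survives) while excluding this bad case from $S_0$. The delicate bookkeeping is making sure the per-coordinate discretization error of at most $\varepsilon$ together with any slack from truncation is precisely absorbed by the $\varepsilon|X|$ term in $h(X)$, which is exactly why that additive slack is built into the statement of the lemma.
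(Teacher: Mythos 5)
Your core construction---sampling an i.i.d.\ copy $p'''$, rounding it up to a multiple of $\varepsilon$ on the worst-case set $X^*$, using the three-way exchangeability of $p',p'',p'''$ together with subadditivity to extract the $\tfrac{1}{3}$ factor, applying monotonicity to lift from $X^*$ to the ``win set,'' and invoking the probabilistic method to replace the random $F$ by a deterministic $f$---is exactly the paper's argument.

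Where you depart, and where a genuine gap opens, is in the handling of the $v_{max}$ cap that defines $B_\varepsilon$. You truncate $p',p'',p'''$ at $v_{max}$ before comparing, claiming this ``excludes the bad case from $S_0$.'' It does not. If all of $p_j',p_j'',p_j'''$ exceed $v_{max}$, then $\tilde p_j' = \tilde p_j'' = \tilde p_j''' = v_{max}$, so coordinate $j$ lands in $S_0$ (or in every $S_i$, depending on your tiebreak), yet $F_j = v_{max} < p_j'$, so $j \notin S^F$ and the inclusion $S_0 \subseteq S^F$ fails. Your stated justification---that $\tilde p_j''' \geq \max(\tilde p_j',\tilde p_j'')$ ``forces $p_j',p_j'' < v_{max}$''---is false: truncation only tells you the truncated values are comparable, not that the raw values are small. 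Redefining $S_0$ with strict inequalities to dodge the tie breaks the coverage $S_0 \cup S_1 \cup S_2 \supseteq X^*$ that the subadditivity step requires. A smaller issue in the same spirit: with $F_j = \varepsilon\lceil p_j'''/\varepsilon\rceil$ you only get $F_j \geq p_j'''$, so ties already threaten strictness even without the cap; the paper uses $\lfloor p_j'''/\varepsilon\rfloor\varepsilon + \varepsilon$, which is strictly larger than $p_j'''$.

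The paper sidesteps all of this by not truncating the $p$-vectors at all: it builds $\hat f$ purely from $p'''$, accepts that $\hat f$ may leave $B_\varepsilon$, and then observes that on any realization with some $\hat f_j > v_{max}$ we have $\hat f(M) > v_{max} \geq v(M) \geq \mathbb{E}\bigl[v(\{j : \hat f_j > \max(p_j',p_j'')\})\bigr]$, so the left-hand side $g(\hat f)$ is strictly negative there. Replacing $\hat f$ by the zero vector on those realizations can only increase $g$ (to $0$), keeps the random vector supported inside $B_\varepsilon$, and the probabilistic method then delivers the deterministic $f \in B_\varepsilon$. That post-hoc replacement is the move your proposal is missing; the truncation you built into the construction cannot substitute for it.
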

\begin{proof}
Clearly, the right hand side of this equation is a function of solely $X$. Thus if we show there exists a vector $f$ for the tightest lower bound, or the set $X^*$ that maximizes the lower bound, we have proved the statement. Moreover, we show the statement holds for a random vector $\hat{f}$, as we know if it holds in expectation then there exists a deterministic vector in $B_\varepsilon$ that thus satisfies this statement. So, we define $\hat{f}$ as follows where $p'''$ is an i.i.d. copy of $p'$:

\[ \hat{f_j} =\begin{cases} 
      \lfloor p_j'''/\varepsilon \rfloor \cdot \varepsilon + \varepsilon & \text{if } j \in X^* \\
      0 & \text{otherwise} \\
   \end{cases}
\]
By this definition, $\forall j \in X^*$, $\hat{f_j} > p'''$, and it follows due to monotonicity of $v$:
 $$\mathbb{E} \left[ v(\{j \mid \hat{f}_{j} > \max\{p_j', p_j''\}\}) \right] \geq  \mathbb{E} \left[ v(\{j \mid p_{j}''' \geq \max\{p_j', p_j''\}\} \cap X^*) \right]$$
 
\noindent Since $p', p''$ and $p'''$ are i.i.d., we may interchange them inside the expectation:
\begin{align*}
    \mathbb{E} \left[ v(\{j \mid p_{j}''' \geq \max\{p_j', p_j''\}\} \cap X^*) \right] &= \mathbb{E} \left[ v(\{j \mid p_{j}'' \geq \max\{p_j', p_j'''\}\} \cap X^*) \right] \\
    &= \mathbb{E} \left[ v(\{j \mid p_{j}' \geq \max\{p_j''', p_j''\}\} \cap X^*) \right]
\end{align*}
We may add these up:
\begin{align*}
&3 \cdot \mathbb{E} \left[ v(\{j \mid p_{j}''' \geq \max\{p_j', p_j''\}\} \cap X^*) \right] \\&= \mathbb{E} \left[ v(\{j \mid p_{j}''' \geq \max\{p_j', p_j''\}\} \cap X^*) \right] + \mathbb{E} \left[ v(\{j \mid p_{j}'' \geq \max\{p_j', p_j'''\}\} \cap X^*) \right] + \mathbb{E} \left[ v(\{j \mid p_{j}' \geq \max\{p_j''', p_j''\}\} \cap X^*) \right] \\
& \geq \mathbb{E}  \left[v \left( \bigg\{j \mid  p_{j}''' \geq \max\{p_j', p_j''\}\} \text{ or } p_{j}'' \geq \max\{p_j', p_j'''\} \text{ or } 
p_{j}' \geq \max\{p_j''', p_j''\}\bigg\} \cap X^* \right) \right] \\
& = \mathbb{E} \left[v \left(M \cap X^*\right) \right] = \mathbb{E} [v (X^*)]
\end{align*}
where the inequality follows from the subadditivity of $v$ and
$\bigg\{j \mid  p_{j}''' \geq \max\{p_j', p_j''\}\} \text{ or } p_{j}'' \geq \max\{p_j', p_j'''\} \text{ or } 
p_{j}' \geq \max\{p_j''', p_j''\}\bigg\} = M$. Thus, 
$$\mathbb{E} \left[ v(\{j \mid p_{j}''' \geq \max\{p_j', p_j''\}\} \cap X^*) \right] \geq \frac{1}{3} \mathbb{E} [v (X^*)]$$

\noindent Because $\hat{f}_j \leq p_j''' + \varepsilon$ for all $j \in X^*$, we may subtract the expectation of this inequality from the above equation thus:
\begin{equation}
\label{proof end}
    \mathbb{E} \left[ v(\{j \mid \hat{f_{j}} > \max\{p_j', p_j''\}\}) \right]-\hat{f}(M) \geq \frac{1}{3} v(X) - \mathbb{E}[p'(X)] - \varepsilon \cdot |X |
\end{equation}

So if $\hat{f} \in B_\varepsilon$, we are done. However, to make sure $\hat{f} \in B_\varepsilon$, we consider the case that it's not which would imply by definition that there exists a $j \in [m]$ such that $\hat{f_j} > v_{max}$. Thus, we know that $\hat{f}(M) = \sum_{j \in M}\hat{f_j} > v_{max} \geq v(M)$. Furthermore, note that $\{j \mid \hat{f_{j}} > \max\{p_j', p_j''\}\} \subseteq M \Longrightarrow \mathbb{E} \left[ v(\{j \mid \hat{f_{j}} > \max\{p_j', p_j''\}\}) \right] \leq v(M)$. This implies that $\mathbb{E} \left[ v(\{j \mid \hat{f_{j}} > \max\{p_j', p_j''\}\}) \right] -\hat{f_j}(M) < 0$ for such a vector $\hat{f}$. 

Now consider replacing this vector $\hat{f}$ with the zero vector. This makes the LHS of \ref{proof end} exactly $0$, which means the inequality still holds for the zero vector. Furthermore, the zero vector is trivially in $B_{\varepsilon}$. Thus, we can claim that there always exists the desired vector $\hat{f} \in B_{\varepsilon}$.

\end{proof}

\noindent\textbf{Key Mapping:} Now, define the space $\mathcal{L} = \bigtimes_{i \in N} \Delta (B_\varepsilon)^{V_i}$. Breaking this notation down, $\Delta (B_\varepsilon)$ is the set of probability distributions over the set $B_\varepsilon$. So $\Delta (B_\varepsilon)^{V_i}$ is defined for all $v\in V_i$, for a specific $V_i$. That is, we may think of this as the space of RSGs for person $i$ with scores in $B_\varepsilon$. To generalize this for all $i \in N$, we take the Cartesian product over this and we find that $\mathcal{L}$ is the space of IRSGs with scores in $B_\varepsilon$.

Next, we define a set-valued function $\Phi: \mathcal{L} \to 2^\mathcal{L}$. For a given IRSG, $\mathcal{S} = (\mathcal{S}_i)_{i \in N} \in \mathcal{L}$, we sample independently $v_i', v_i'' \sim \mathcal{D}_i$ and $b_i' \sim \mathcal{S}_i(v_i')$, $b_i'' \sim \mathcal{S}_i(v_i'')$ for all $i \in N$. Using these samples, we define $p_j' = \max_{i \in N} b_{i,j}'$ and $p_j'' = \max_{i \in N} b_{i,j}''$. We define $\Phi(\mathcal{S}) \subseteq \mathcal{L}$ to be set of IRSGs $\mathcal{I}$, such that every IRSG $\mathcal{G}= (\mathcal{G}_i)_{i \in N} \in \mathcal{I}$ we have: for all $i \in N$ and $v \in V_i$ if $f \sim \mathcal{G}_i(v)$ that,
\begin{equation}
\label{hi ananya}
    \mathbb{E} \left[ v(\{j \mid f_{j} > \max\{p_j', p_j''\}\}) \right]-f(M) \geq \underset{X \subseteq M}{\max} \left\{\frac{1}{3} v(X) - \mathbb{E}[p'(X)] - \varepsilon \cdot |X |\right\}
\end{equation}

In order to use Kakutani's fixed point theorem, we must prove the sufficient conditions for $\mathcal{L}$ and $\Phi(\mathcal{S})$.

\begin{lemma}
\label{helper 2}
    $\mathcal{L}$ is a nonempty, convex and compact subset of a Euclidean space; the function $\Phi$ is upper hemicontinuous; and for every $\mathcal{S} \in \mathcal{L}$, the set $\Phi(\mathcal{S})$ is non-empty, closed and convex.
\end{lemma}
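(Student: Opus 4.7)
The plan is to exploit the fact that $B_\varepsilon$ is a finite set: each coordinate takes values in $\{0,\varepsilon,2\varepsilon,\dots,\lfloor v_{max}/\varepsilon\rfloor\cdot\varepsilon\}$ and there are $m$ coordinates, so $|B_\varepsilon|<\infty$. Consequently $\Delta(B_\varepsilon)$ embeds as a standard simplex inside $\mathbb{R}^{|B_\varepsilon|}$, and since each $V_i$ is finite by the standing assumption at the top of this subsection, $\Delta(B_\varepsilon)^{V_i}$ and the outer product $\mathcal{L}$ are finite Cartesian products of simplices. Each factor is nonempty, convex, and closed and bounded in Euclidean space, hence compact by Heine-Borel (Theorem \ref{Heine}); all three properties pass through finite products, which settles the first clause.

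For the claims about $\Phi(\mathcal{S})$, fix $\mathcal{S}\in\mathcal{L}$; it determines the joint distribution of $(p',p'')$ via the sampling rule preceding \eqref{hi ananya}. To see $\Phi(\mathcal{S})$ is nonempty, I would apply Lemma \ref{helper 1} for every $i\in N$ and every $v\in V_i$ with this choice of $(p',p'')$, obtaining a deterministic $f_i^v\in B_\varepsilon$ that satisfies the inequality simultaneously for every $X\subseteq M$; the Dirac distributions $\mathcal{G}_i(v):=\delta_{f_i^v}$ then assemble into an IRSG in $\Phi(\mathcal{S})$. Convexity is immediate because the left-hand side of \eqref{hi ananya} is linear in the distribution $\mathcal{G}_i(v)$: it is the $\mathcal{G}_i(v)$-expectation of a functional that does not depend on $\mathcal{G}$, so a convex combination of two IRSGs satisfying the inequality still satisfies it. Closedness follows because, with $B_\varepsilon$ finite, the left-hand side is literally linear in the mass coordinates of $\mathcal{G}$, so $\Phi(\mathcal{S})$ is a finite intersection (over $i$, $v$, and the maximizing $X$) of preimages of $[0,\infty)$ under continuous maps.

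Finally, for upper hemicontinuity, I would use the closed-graph characterization: since $\mathcal{L}$ is compact, it suffices to show that the graph $\{(\mathcal{S},\mathcal{G}):\mathcal{G}\in\Phi(\mathcal{S})\}$ is closed in $\mathcal{L}\times\mathcal{L}$. Given $(\mathcal{S}^{(n)},\mathcal{G}^{(n)})\to(\mathcal{S},\mathcal{G})$ with $\mathcal{G}^{(n)}\in\Phi(\mathcal{S}^{(n)})$, I would pass \eqref{hi ananya} through the limit: the left-hand side is a finite sum over $f\in B_\varepsilon$ of the quantities $v(\{j:f_j>\max\{p'_j,p''_j\}\})$ weighted by joint probabilities that depend continuously (indeed bilinearly, by independence of $f$ and $(p',p'')$) on $\mathcal{G}$ and $\mathcal{S}$, and the right-hand side is the maximum over the finite family of subsets $X\subseteq M$ of the maps $\mathcal{S}\mapsto\tfrac13 v(X)-\mathbb{E}[p'(X)]-\varepsilon|X|$, each continuous in $\mathcal{S}$. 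The main obstacle I anticipate is bookkeeping: keeping straight that $\mathcal{S}$ governs the distribution of $(p',p'')$ through its $V_i$-indexed marginals composed with $\mathcal{D}_i$, while $\mathcal{G}$ governs only the distribution of $f$ for each $(i,v)$. Once that separation is set up cleanly, everything collapses to continuity of finite-dimensional linear maps, and Kakutani's theorem can then be applied to deliver the fixed-point IRSG required for Lemma \ref{helper 0}.
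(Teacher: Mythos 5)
Your proof matches the paper's on every substantive point: you realize $\mathcal{L}$ as a finite product of simplices (the paper instead writes it out as a constrained subset of $[0,1]^l$, but this is the same object), populate $\Phi(\mathcal{S})$ via Lemma \ref{helper 1} applied per $(i,v)$ with Dirac score generators, observe that the defining inequalities are linear in the mass coordinates of $\mathcal{G}$ to get convexity and closedness, and obtain upper hemicontinuity from the closed-graph criterion on the compact domain — exactly what the paper does, only phrased there as "continuity." One small imprecision worth flagging: the dependence of the law of $(p',p'')$ on $\mathcal{S}$ is not bilinear but a degree-$2n$ polynomial in the coordinates of $\mathcal{S}$, since $\pi_{p'}(x)$ and $\pi_{p''}(x)$ each involve a product over the $n$ agents; this is harmless because continuity is all the closed-graph argument requires.
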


\begin{proof}
    Before we prove these conditions, we rewrite the space $\mathcal{L}$ into a simpler form. First, note that the size of $B_\varepsilon$ is finite and the size is $|B_\varepsilon| = \lfloor v_{max}/ \varepsilon \rfloor ^{|M|}$ because $s$ iterates through exactly this many values in the set. Now, we know that any element $x \in \mathcal{L}$ is a collection of $\sum_{i \in N} |V_i|$ probability distributions over $B_\varepsilon$. This is because for every valuation function, we define a distribution over $B_\varepsilon$. So, $x$ can be written as a vector of dimension $\lfloor v_{max}/ \varepsilon \rfloor ^{|M|} \cdot \sum_{i \in N} |V_i|$ by appending all of these distributions, where each entry is $\in [0,1]$. Thus, we know that any $x \in \mathcal{L}$ can be written as a subset of $[0,1]^l$ where $l = \lfloor v_{max}/ \varepsilon \rfloor ^{|M|} \cdot \sum_{i \in N} |V_i|$. 
    
    Therefore, we may define $\mathcal{L}$ as the following set where every vector $x$ corresponds to a vector $b$, person $i$, and valuation $v$. Moreover, the sum over scores $b \in B_\varepsilon$ of $x$ is $1$. These entries can be interpreted as the probability that with a fixed $v \in V_i$, for a fixed agent $i$, that the scores are $b \in B_\varepsilon$. 
    
    $$\mathcal{L} = \bigg\{x = (x_{b, i, v})_{b\in B_\varepsilon, i \in N, v \in V_i} \in [0,1]^l \mid \sum _{b\in B_\varepsilon} x_{b, i, v} = 1, \forall i \in N, v\in V_i\bigg\}$$

\noindent Now for $\mathcal{L}$, we prove the conditions:
    \begin{itemize}
        \item \textbf{Nonempty:} We know that $B_{\varepsilon}$ is non-empty as shown earlier. It follows that $\mathcal{L}$ is also non-empty. 
        \item \textbf{Convex:} The set $\mathcal{L}$ is convex due to the following argument. Consider any pair $x, y \in \mathcal{L}$ and $z = \alpha x + (1 - \alpha)y$ for some $\alpha \in [0,1]$, where $z \in \mathbb{R}^l$. First note that for all $i \in N$ and $v \in V_i$, we have that $\sum_{b \in B_{\varepsilon}}x_{b, i, v}=\sum_{b \in B_{\varepsilon}}y_{b, i, v} =1$. Thus, we can write for all $i \in N$ and $v \in V_i$:
        $$\sum_{b \in B_{\varepsilon}}z_{b, i, v} = \alpha\sum_{b \in B_{\varepsilon}}x_{b, i, v} + (1-\alpha)\sum_{b \in B_{\varepsilon}}y_{b, i, v} = 1$$
        Above we substituted and used the fact that $\alpha$ is a scalar. Thus, we know that $z \in\mathcal{L}$, proving that $\mathcal{L}$ is indeed convex.     
        \item \textbf{Compact:} Clearly the set $\mathcal{L}$ is finite and closed thus implying it is compact by Theorem \ref{Heine}.
    \end{itemize}

    \noindent Next we show the following conditions for $\Phi(\mathcal{S})$: 

    \begin{itemize}
        \item \textbf{Nonempty:} $\Phi(\mathcal{S})$ is nonempty as by Lemma \ref{helper 1}, there exists some vector $f$ that satisfies the conditions required be in this set.
        \item \textbf{Convexity and Compactness:} To prove the convex and compact argument, we define the distribution of $p'$ and $p''$. Given some $x \in \mathcal{L}$, where $x = (x_{b, i, v})_{b\in B_\varepsilon, i \in N, v \in V_i}$, the distribution can be denoted by $\pi(x)$ where $\pi(x) \in \Delta(B_\varepsilon)$. That is, $\pi(x)$ is a probability distribution over the set $B_\varepsilon$ of scores. We write this distribution for some $f \in B_\varepsilon$ where $f$ is equal to $p$ or any similar vector (i.e. $p'$, $p''$, etc). We let $\mathbf{v} = (v_1, \dots, v_n)$, a vector of valuation functions for all people $i \in N$ where $v_i \in V_i$. Moreover, we let $\mathbf{b} = (b_1, \dots, b_n)$, a vector of $b_i$'s (which themselves are vectors of scores for each item $j \in M$) for all people $i \in N$. 
        This distribution can thus be defined by conditioning on a fixed set of valuation fucntions for the $n$ people, then multiplying by the probability that the set of valuations is exactly these valuation functions (over all possible valuation functions and sets of scores). Stated differently, this distribution is the probability that $\mathbf{v}$ is exactly some set of valuation functions (and we sum across all possible sets of valuation functions) multiplied by the probability that given such a fixed valuation function that the score $b_i$ for some $j$ is the max (or is equal to $p_j$ over all possible sets of scores $\mathbf{b} \in B_\epsilon ^N$)
        \begin{align*}
        \pi_f(x) &= \sum_{(v_i)_{i \in N} \in \bigtimes_{i \in N} V_i} \Pr[\mathbf{v} = (v_i)_{i \in N}] \cdot \sum_{(b_i)_{i \in N} \in B_\epsilon ^N : \max_{i \in N}b_{i, j} = f_j} \Pr[\mathbf{b} = (b_i)_{i \in N} \mid \mathbf{v} = (v_i)_{i \in N}] \\
        &= \sum_{(v_i)_{i \in N} \in \bigtimes_{i \in N} V_i} \left(\prod_{i \in N} q_{i, v_i}\right)\cdot \sum_{(b_i)_{i \in N} \in B_\epsilon ^N : \max_{i \in N}b_{i, j} = f_j} \left(\prod_{i \in N} x_{b_i, i, v_i}\right)
        \end{align*}

    where we let $q_{i, v_i}$ be the probability $v=v_i$ for some person $i$. The above lines use the law of total probability and independence respectively. Now we may rewrite the constraint of Equation \ref{hi ananya} for the set $\Phi$ using $\pi$, where for $x, y \in \mathcal{L}$, $y \in \Phi(x)$ if and only if for all $i \in N, v \in V_i,\text{ and }X \subseteq M$:
    \begin{align*}
    &\sum_{f \in B_\epsilon}\sum_{p', p'' \in B_\epsilon} y_{f, i, v}\cdot \pi_{p'}(x) \cdot \pi_{p''}(x) \cdot \left(\left( v(\{j \mid f_{j} > \max\{p_j', p_j''\}\}) \right)-f(M)\right) \\ &\geq \left\{\frac{1}{3} v(X) - \sum_{p' \in B_\epsilon}\pi_{p'} (x) \cdot p'(X) - \varepsilon \cdot |X |\right\}
    \end{align*}

    We find the above line by applying the definition of expectation. Since these are clearly linear constraints we know this set is convex. Moreover, there are finitely many of these and thus this set is finite and bounded and therefore compact.
    \end{itemize}

\noindent Now we just need to show $\Phi$ is upper hemicontinuous. If $\Phi$ is continuous in $x$ and $y$ then $\Phi$ is upper hemicontinuous in $x$ and $y$. So we prove continuity in the following lines by showing that for any two convergent sequences $(x^{(k)})_{k \in \mathbb{N}} \in \mathcal{L}$ and $(y^{(k)})_{k \in \mathbb{N}} \in \mathcal{L}$, these converge to values $x^{(\infty)}$ and $y^{(\infty)}$ respectively. We choose these sequences such that $y^{(k)} \in \Phi(x^{(k)}),  \forall k \in \mathbb{N}$. Now looking at the equation above, we realize that $\pi$ is a continuous function in $x$ and $y$. Because this equation holds for $y^{(k)} \in \Phi(x^{(k)})$ for any $k$, it holds for the sequences $(x^{(k)})_{k \in \mathbb{N}}$ and $(y^{(k)})_{k \in \mathbb{N}}$. But since continuous, this equation thus holds for $x^{(\infty)}$ and $y^{(\infty)}$. This implies that $y^{(\infty)}\in \Phi(x^{(\infty)})$ by the if and only if statement above and $\Phi$ is continuous.

\end{proof}

\begin{proof}[Proof of Lemma \ref{helper 0}]
   Lemma \ref{helper 2} gives the conditions required to invoke Kakutani's fixed point theorem, which gives that there is an IRSG, $\mathcal{S}$, such that $\mathcal{S} \in \Phi(\mathcal{S})$ which proves that for $\varepsilon > 0$, there is an IRSG such that:
   \begin{equation}
   \label{lil guy}
       \mathbb{E} \left[ v(\{j \mid f_{j} > \max\{p_j', p_j''\}\}) \right]-f(M) \geq \underset{X \subseteq M}{\max} \left\{\frac{1}{3} v(X) - \mathbb{E}[p'(X)] - \varepsilon \cdot |X |\right\}
   \end{equation}

   \noindent We know that we can rewrite the left-hand side of Equation \ref{lil guy} as follows:
\begin{align*}
     \mathbb{E} \left[ v_i(\{j \mid f_{j} > \max\{p_j', p_j''\}\})\right ]-f(M) &=  \mathbb{E} \left[ v_i(\{j \mid b_{i,j} > \max\{p_j', p_j''\}\})-b_{i}(M) \,\middle\vert\, v_i \right ] \\
    &= \mathbb{E} \left[ v_i(W_i)-b_{i}(M) \,\middle\vert\, v_i \right ]
\end{align*}

\noindent where the above lines follow from replacing $f$ with some vector $b_i$ due to $\mathcal{S}$ being a fixed point. Clearly $b_i$ does depend on the valuation for person $i$ thus we condition on this valuation. Applying the definition of $W_i$ concludes the proof.
\end{proof}

\begin{proof}[Proof of Theorem \ref{big guy}]
    We consider the IRSG $\mathcal{S}$ promised by Lemma \ref{helper 0}, with some $\delta > 0$. Reiterating notation, we have $\textsf{OPT}_i(\vb{v})$ as the optimal allocation to person $i$ under the subadditive profile $\vb{v}$. Applying the Lemma \ref{Mirror} (Mirror Lemma), we get:
    \begin{align*}
        \mathbb{E}[\vb{v}(\textsf{ALG}(\vb{v}))] &\geq \frac{1}{2}\sum_{i \in N}\mathbb{E}[v_i(W_i)]\\
        &= \frac{1}{2}\sum_{i \in N}\mathbb{E}[v_i(W_i) - b_i(M)] + \sum_{i \in N}\mathbb{E}[b_i(M)] \quad\text{by Linearity of Expectation}\\
        &= \frac{1}{2}\sum_{i \in N}\mathbb{E}[\mathbb{E}[v_i(W_i) - b_i(M)|v_i]] + \sum_{i \in N}\mathbb{E}[b_i(M)] \quad \text{by Tower Property}\\
        &\geq \frac{1}{2}\sum_{i \in N}\mathbb{E}\left[\frac{1}{3} v_i(\textsf{OPT}_i(\vb{v})) - \mathbb{E}[p'(\textsf{OPT}_i(\vb{v}))] - \delta \cdot |\textsf{OPT}_i(\vb{v})|\right] + \sum_{i \in N}\mathbb{E}[b_i(M)] \quad \text{by Lemma \ref{helper 0}}
    \end{align*}
    \noindent In the last line, we set $X = \textsf{OPT}_i(\vb{v})$. Note that $\cup_{i \in N}\textsf{OPT}_i(\vb{v}) \subseteq M$. Thus, we can write:
    $$\mathbb{E}[\vb{v}(\textsf{ALG}(\vb{v}))] \geq \frac{1}{6} \cdot \mathbb{E}\left[ \vb{v}(\textsf{OPT}(\vb{v}))\right] - \mathbb{E}[p'(M)] - \delta \cdot |M| + \sum_{i \in N}\mathbb{E}[b_i(M)]$$
    We notice that $\sum_{i \in N}\mathbb{E}[b_i(M)] - \mathbb{E}(p'(M))) \geq 0$, since $\mathbb{E}[p_j']=\mathbb{E}[p_j]=\mathbb{E}[\max_{i \in N}b_{i,j}] \leq \sum_{i\in N}\mathbb{E}[b_{i,j}]$ for all $j \in M$.
    Thus, we can write:
    $$\mathbb{E}[\vb{v}(\textsf{ALG}(\vb{v}))] \geq \frac{1}{6} \cdot \mathbb{E}\left[ \vb{v}(\textsf{OPT}(\vb{v}))\right] - \delta \cdot |M|$$
    Setting $\delta \leq \frac{\varepsilon \cdot \mathbb{E}[ \vb{v}(\textsf{OPT}(\vb{v}))]}{6\cdot(6 + \varepsilon)\cdot |M|}$ proves the theorem.
    
\end{proof}

The proof we presented in this section assumed that supports of the distributions of valuation functions are finite, which was implied by our assumption of $V_i$ being a finite set for all $i \in N$. Furthermore, we assumed that the valuations were uniformly bounded by some constant.

There are techniques to get around both. The first uses a trick where the supports are not finite, but are uniformly bounded in which case they simply discretize them and lose an extra factor of $\varepsilon$. Secondly, we can make the uniformly bounded assumption by truncating the valuations, which can be done since we assumed $\mathbb{E}[\vb{v}(\textsf{OPT}(\vb{v}))] < \infty$. For formal proofs of these facts, we defer to the Section $6$ of \cite{constant}.

\section{Concluding Remarks}

In conclusion, we have provided a comprehensive analysis of prophet inequalities for combinatorial auctions, specifically where the class of valuations are subadditive. We started by defining the problem at heart then giving a brief introduction of significant literature in the field. Next, we thoroughly surveyed the recent paper by Correa and Cristi in which they provide a proof of existence of a constant factor prophet inequality. However, it is important to notice that this proof is \textit{non constructive} and cannot be implemented with prices. Regardless, the paper introduces concepts  such as Random Score Generators, which can be used to ``generate'' prices. Though an IRSG for subadditive values cannot be constructed, its existence is a substantial enough result. In fact, it is possible to construct IRSGs in the case where valuations are fractionally subadditive or XOS; however, they provide a worse guarantee than the current best bound. 

Despite these limitations, the work by Correa and Cristi provides inspiration as well as the possibility of extending these ideas to design a constructive mechanism which achieves a constant approximation while being implementable by prices, such as posted price mechanisms. Such a mechanism would be a significant advancement in the field, and it would have many practical applications in settings where combinatorial auctions are used.

\section{Acknowledgements}

We would like to thank Professors Matthew Weinberg and Huacheng Yu for their kind feedback and support.









\newpage
\bibliographystyle{alpha}
\bibliography{refs} 
\end{spacing}
\end{document}